\newtheorem{theorem}{Theorem}
\newtheorem{lemma}{Lemma}
\theoremstyle{definition}
\newtheorem{definition}{Definition}
\newtheorem{example}[]{Example}
\title{Algorithmic Techniques for Necessary and Possible Winners}
\author{
    Vishal Chakraborty\\
   \footnotesize UC Santa Cruz, USA\\
   \texttt{vchakrab@ucsc.edu} \\
   \And
   Theo Delemazure \\
   Ecole Normale Superieure, France \\
  \texttt{ theo.delemazure@ens.fr} \\
  \And
  Benny Kimelfeld\\
  Technion - Israel Institute of Technology, Israel\\
  \texttt{bennyk@cs.technion.ac.il}\\
  \And
  Phokion G. Kolaitis\\
  UC Santa Cruz, USA \\
  IBM Research - Almaden, USA\\
  \texttt{kolaitis@ucsc.edu}\\
  \And
  Kunal Relia\\
  New York University, USA\\
  \texttt{krelia@nyu.edu}
  \And
  Julia Stoyanovich\\
  New York University, USA\\
  \texttt{stoyanovich@nyu.edu}\\
}
\newcommand{\rsmm}{{\em RSM Mix} }
\newcommand{\julia}[1]{\textcolor{red}{[Julia: #1]}}
\newcommand{\kunal}[1]{\textcolor{red}{[Kunal: #1]}}
\newcommand{\colorOne}{\color{blue}}
\newcommand{\colorTwo}{\color{black}}
\newcommand{\checkNW}{\mathrm{checkNW}}
\newcommand{\PW}{\mathrm{PW}}
\newcommand{\NW}{\mathrm{NW}}
\newcommand{\eat}[1]{}
\newcommand{\up}[1]{$\textsc{Up}_{P}#1$}
\newcommand{\down}[1]{$\textsc{Down}_{P}#1$}
\newcommand{\npcom}{NP-complete\xspace}
\newcommand{\npcomshort}{NP-C\xspace}
\newcommand{\PWPC}{\mbox{PW-PC}}
\newcommand{\PWPV}{\mbox{PW-PP}}
\newcommand{\PWDV}{\mbox{PW-DTB}}
\newcommand{\PWTV}{\mbox{PW-TTB}}
\newcommand{\PWBV}{\mbox{PW-BTB}}
\newcommand{\Ptime}{\mbox{P}}
\newcommand{\NP}{\mbox{\sc NP}}
\newcommand{\plurality}{plurality\xspace}
\newcommand{\veto}{veto\xspace}
\newcommand{\ie}{i.e.,\xspace} 
\newcommand{\eg}{e.g.,\xspace} 
\newcommand{\etal}{et al.\xspace}
\def\btau{\boldsymbol{\tau}} 
\def\bsigma{\boldsymbol{\sigma}}
\newcommand*{\ranking}[1]{\langle #1 \rangle}
\def\RIM{\mathsf{RIM}}
\def\mallows{\mathsf{MAL}}
\def\RSM{\mathsf{RSM}}
\def\Pr{\mathrm{Pr}}
\def\dist{\mathit{dist}}
\def\T{\mathbf{T}}
\def\P{\mathbf{P}}
\begin{document}
\maketitle
\begin{abstract}
We investigate the practical aspects of computing the necessary and possible winners in elections over incomplete voter preferences.  In the case of the necessary winners, we show how to implement and accelerate the polynomial-time algorithm of Xia and Conitzer.  In the case of the possible winners, where the problem is NP-hard, we give a natural reduction to Integer Linear Programming (ILP) for all positional scoring rules and implement it in a leading commercial optimization solver.  Further, we devise optimization techniques to minimize the number of ILP executions and, oftentimes, avoid them altogether.
We conduct a thorough experimental study that includes the construction of a rich benchmark of election data based on real and synthetic data.  Our  findings suggest that, the worst-case intractability of the possible winners notwithstanding, the algorithmic techniques presented here scale well and can be used to compute the possible winners in realistic scenarios.
\end{abstract}
\section{Introduction}
\label{sec:intro}

The theory of social choice focuses on the question of how preferences of individuals can be aggregated in such a way that the society arrives at a collective decision.  It has been of interest throughout the history of humankind, from the analysis of election manipulation by Pliny the Younger in Ancient Rome, to the $18^{th}$ Century studies of voting rules by Jean-Charles de Borda and Marquis de Condorcet, and the more recent ground-breaking work on dictatorial vote aggregation by Kenneth Arrow in the 1950s.  Over the past two decades, \emph{computational social choice} has been developing as an interdisciplinary area between social choice theory, economics, and computer science, where the central topics of study are the computational and algorithmic perspectives of voting challenges such as vote aggregation~\cite{DBLP:reference/choice/0001CELP16}.

A voting rule determines how the collection of voter preferences over a set of candidates is mapped to the set of winning candidates (the winners). Among the most extensively studied is the class of \emph{positional scoring rules}, where every candidate receives a score from every voter that is determined only by the position of the candidate in the voter's ranking. 
A candidate wins if she achieves the highest total score --- the sum of scores it receives from each voter.


A particularly challenging computational aspect arises in situations  in which voter preferences are only \emph{partial} (\ie~they can be modeled as partial orders). This might happen since, for example, voters may be undecided about some candidates or, simply, only partial knowledge of the voter's preference is available (\eg~knowledge is inferred indirectly from opinions on issues).  The problem already manifests itself at the semantic level: what is the meaning of vote aggregation in the presence of incompleteness, if voting rules require complete knowledge?  For this reason, Konczak and Lang~\cite{konczak2005voting} introduced the notions of \emph{necessary winners} and \emph{possible winners} as the candidates who win in \emph{every} completion, and, respectively, \emph{at least one} completion of the given partial preferences. 

This work led to a classification of the computational complexity of the necessary and possible winners for a large variety of voting rules~\cite{DBLP:journals/ipl/BaumeisterR12,DBLP:journals/jcss/BetzlerD10,DBLP:journals/jair/XiaC11}. Specifically, under (efficiently computable) positional scoring rules, the necessary winners can be computed in polynomial time via the algorithm of Xia and Conitzer~\cite{DBLP:journals/jair/XiaC11}.  The possible winners can be computed in polynomial time under the \plurality and \veto rules, but their computation is NP-hard for every other \emph{pure} rule, as established in a sequence of studies~\cite{DBLP:journals/ipl/BaumeisterR12,DBLP:journals/jcss/BetzlerD10,konczak2005voting,DBLP:journals/jair/XiaC11}. Here, \emph{pure} means that the scoring vector for $m$ candidates is obtained from that for $m-1$ candidates by inserting a new score into the vector.

In this paper, we explore the practical aspects of computing the necessary and possible winners. Specifically, we investigate the empirical feasibility of this challenge,  develop algorithmic techniques to accelerate and scale the execution, and  conduct a thorough experimental evaluation of our techniques. 
For the necessary winners, we show how to accelerate the Xia and Conitzer algorithm through  mechanisms of early pruning and early termination.   For the possible winners, we focus on positional scoring rules for which the problem is NP-hard. We first give a natural polynomial-time reduction of the possible winners  to Integer  Linear Programming (ILP) for all positional scoring rules.  Note that ILP has been used in earlier research on the complexity of voting problems as a theoretical technique for proving upper bounds (fixed-parameter tractability) in the parameterized complexity of the possible winners~\cite{DBLP:conf/ijcai/BetzlerHN09,DBLP:conf/pods/KimelfeldKT19,DBLP:conf/ecai/Yang14} or in election manipulation problems involving complete preferences~\cite{DBLP:journals/eor/PolyakovskiyBN16}.  Here, we investigate the use of ILP solvers to compute the possible winners in practice. Our experiments on a leading commercial ILP solver (Gurobi v8.1.1) show that the reduction produces ILP programs that are often too large to load and too slow to solve. For this reason, we develop several techniques to minimize or often eliminate ILP computations and, when the use of ILP is unavoidable, to considerably reduce the size of the ILP programs.

We conduct an extensive experimental study that includes the construction of a rich benchmark of election data based on both real and synthetic data.  Our experimental findings suggest that, the worst-case intractability of the possible winners notwithstanding, the algorithmic techniques presented here scale well and can be used to compute the possible winners in realistic scenarios. 
An important contribution of our work that is of independent interest is a novel generative model for partially ordered sets, called the Repeated Selection Model, or RSM for short.  We believe that RSM may find uses in other experimental studies in computational social choice.


\section{Preliminaries and Earlier Work}
\label{sec:preliminaries}

\paragraph{Voting profiles} 

Let $C = \{c_1, c_2, c_3, \dots, c_m\}$ be a set of \emph{candidates} and let $V = \{v_1,\ldots,v_n\}$ be a set of voters. A \emph{(complete) voting profile} is a tuple $\T=(T_1,\ldots,T_n)$ of total orders of $C$, where each $T_l$ represents the ranking (preference) of voter $v_l$ on the candidates in $C$. Formally, each $T_l$ is a  binary relation $\succ_{T_l}$ on $C$ that is irreflexive (i.e., $c_i \not \succ_{T_l} c_i$, for all $i$), antisymmetric (i.e., $c_i \succ_{T_l} c_j$ implies $c_j \not \succ_{T_l} c_i$, for all $i \not = j)$, transitive (i.e.,    $c_i \succ_{T_l} c_j$ and $c_j \succ_{T_l} c_k$ imply $c_i \succ_{T_l} c_k$, for all $i,j, k$), and total (i.e., $c_i \succ_{T_l} c_j$ or $c_j \succ_{T_l} c_i$ holds for all $i\not = j$).  Similarly, a \emph{partial voting profile} is a tuple $\P=(P_1,\ldots,P_n)$ of partial orders on $C$, where each $P_l$ represents the partial preferences of voter $v_l$ on the candidates in $C$; formally, each $P_l$ is a binary relation on $C$ that is irreflexive, antisymmetric, and transitive (but not necessarily total).   A \emph{completion} of a partial voting profile $\P= (P_1,\ldots,P_n)$ is a complete voting profile $\T=(T_1,\ldots,T_n)$ such that each $T_l$ is a completion of the partial order $P_l$, that is to say, $T_l$ is a total order that extends $P_l$. Note that, in general, a partial voting profile may have exponentially many completions.

\paragraph{Voting rules} We focus on \emph{positional scoring rules}, a widely studied class of voting rules.  A positional scoring rule $r$ on a set of $m$ candidates is specified by a scoring vector ${\bf s}=(s_1,\ldots,s_m)$ of non-negative integers, called the \emph{score values}, such that $s_1\geq s_2\geq \ldots \geq s_m$.  Suppose that $\T=(T_1,\ldots,T_n)$ is a total voting profile. The score $s(T_l,c)$ of a candidate $c$ on $T_l$ is the value $s_k$ where $k$ is the position of candidate $c$ in $T_l$.  The \emph{score} of $c$ under the positional scoring rule $r$ on the total profile $\T$ is the  sum $\sum_{l=1}^ns(T_l,c)$. A candidate $c$ is a \emph{winner} if $c$'s score is greater than or equal to the scores of all other candidates; similarly, $c$ is a \emph{unique winner} if $c$'s score is greater than the scores of all other candidates. The set of all winners is denoted by $\mbox{W}(r, \T)$.

We consider positional scoring rules that are defined for every number $m$ of candidates. Thus, a \emph{positional scoring rule} is an infinite sequence ${\bf s}_1, {\bf s}_2, \ldots, {\bf s}_m, \ldots$ of scoring vectors such that each ${\bf s}_m$ is a scoring vector of length $m$. Alternatively, a positional scoring rule is a function $r$ that takes as argument a pair $(m,s)$ of positive integers with $s\leq m$ and returns as value a non-negative integer $r(m,s)$ such that $r(m,1) \geq r(m,2) \ldots \geq r(m,m)$. We assume that the function $r$ is computable in time polynomial in $m$, hence the winners can be computed in polynomial time. Such a rule is \emph{pure} if the scoring vector ${\bf s}_{m+1}$ of length $(m+1)$ is obtained from the scoring vector ${\bf s}_m$ of length $m$ by inserting a score in some position of  ${\bf s}_ m$, provided that the decreasing order of score values is maintained. We also assume that the scores in every scoring vector are co-prime (i.e., their greatest common divisor is $1$), since multiplying all scores by the same value does not change the winners.

As examples, the \emph{\plurality} rule is given by scoring vectors of the form $(1,0,\dots,0)$, while  the \emph{\veto} rule is given by scoring vectors of the form $(1,1,\ldots,1,0)$. The \plurality rule is the special case $t=1$ of the \emph{$t$-approval} rule, in which the scoring vectors start with $t$ ones and then are followed by zeros. In particular, the \emph{$2$-approval} rule has scoring vectors of the form $(1,1,0,\ldots,0)$. The \emph{Borda} rule, also known as the \emph{Borda count}, is given by scoring vectors of the form $(m-1,m-2,\dots,0)$.

\paragraph{Necessary and possible winners}
Let $r$ be a voting rule and $\P$ a partial voting profile. 
\begin{itemize}
\item  The set $\NW(r,\P)$ of the \emph{necessary winners} with respect to $r$ and $\P$ is the intersection of the sets $\text{W}(r,\T)$, where $\T$ varies over all completions of $\P$. Thus, a candidate $c$ is a \emph{necessary winner} with respect to $r$ and $P$, if $c$ is a winner in $\text{W}(r,\T)$ for every completion $\T$ of $\P$.

\item  The set $\PW(r,\P)$ of the \emph{possible winners} with respect to $r$ and $\P$ is the union of the sets $\text{W}(r,\T)$, where $\T$ varies over all completions of $\P$.  Thus, a candidate $c$ is a \emph{possible winner} with respect to $r$ and $\P$, if $c$ is a winner in $\text{W}(r,\T)$ for at least one completion $\T$ of $\P$.
\end{itemize}

The notions of \emph{necessary unique winners} and \emph{possible unique winners} are defined in analogous manner. The preceding notions were introduced by Konczak and Lang~\cite{konczak2005voting}. Through a sequence of subsequent investigations by  Xia and Conitzer~\cite{DBLP:journals/jair/XiaC11}, Betzler and Dorn~\cite{DBLP:journals/jcss/BetzlerD10}, and Baumeister and Rothe~\cite{DBLP:journals/ipl/BaumeisterR12}, the following classification of the complexity of the necessary and the possible winners  for \emph{all} pure positional scoring rules was established.

\begin{theorem} \label{class-thm} {\rm[Classification Theorem]}
The following statements hold.
\begin{itemize}
\item If $r$ is a  pure positional scoring rule,  there is a polynomial-time algorithm that, given a partial voting profile $\P$,  returns the set $\NW(r,\P)$ of necessary winners. 
\item If $r$ is the \plurality rule or the \veto rule,  there is a polynomial-time algorithm that, given a partial voting profile $\P$,  returns the set $\PW(r,\P)$ of possible winners.  For all other pure positional scoring rules, the following problem is NP-complete: given a partial voting profile $\P$ and a candidate $c$, is $c$ a possible winner w.r.t.\ $r$ and $\P$?
\end{itemize}
Furthermore, the same classification holds for necessary unique winners and possible unique winners.
\end{theorem}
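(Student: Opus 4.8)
The plan is to prove the several claims of the theorem by treating the necessary winners, the tractable possible‑winner cases, and the intractable possible‑winner cases separately, following and reconstructing the arguments of Xia and Conitzer, Betzler and Dorn, and Baumeister and Rothe. The upper‑bound side of the possible winners is immediate: a completion $\T$ of $\P$ is a polynomial‑size certificate and, since $r$ is polynomial‑time computable, checking $\text{W}(r,\T) \ni c$ takes polynomial time, so the possible‑winner problem lies in \NP~for every $r$. Hence it suffices to establish (a) polynomial‑time computability of $\NW(r,\P)$ for every pure $r$; (b) polynomial‑time computability of $\PW(r,\P)$ for \plurality and \veto; and (c) \NP‑hardness of the possible‑winner decision problem for every pure positional scoring rule other than \plurality and \veto. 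In all three parts the unique‑winner analogue is obtained by substituting strict inequality for non‑strict inequality in the winning condition and re‑running essentially the same argument, so I focus on the plain‑winner versions.

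For (a), I would reduce membership in $\NW$ to pairwise comparisons: $c \notin \NW(r,\P)$ iff there is a candidate $d \neq c$ and a completion $\T$ of $\P$ with $\text{score}_\T(d) > \text{score}_\T(c)$, where $\text{score}_\T(x)=\sum_{l=1}^{n} s(T_l,x)$. Since the orders $T_1,\dots,T_n$ in a completion are chosen independently and the score difference is additive over voters,
\[
\max_{\T \text{ completes } \P}\bigl(\text{score}_\T(d)-\text{score}_\T(c)\bigr) \;=\; \sum_{l=1}^{n}\ \max_{T_l \text{ completes } P_l}\bigl(s(T_l,d)-s(T_l,c)\bigr),
\]
so it remains to compute each per‑voter maximum in polynomial time. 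One can do this by enumerating the $O(m^2)$ candidate position pairs for $c$ and $d$ and, for each, testing in polynomial time whether some linear extension of $P_l$ realizes that pair (a standard order‑ideal feasibility check) and reading off the score difference; Xia and Conitzer give a faster exchange argument exploiting the monotonicity of the scoring vector, since one wants $d$ placed high and $c$ placed low. Summing these optima over $l$ and ranging over all $d \neq c$ decides whether $c \in \NW(r,\P)$ in polynomial time.

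For (b), take \plurality (the \veto case is dual). Only the first‑ranked candidate of each voter scores. Candidate $c$ can be placed first by voter $l$ exactly when $c$ is a maximal element of $P_l$, and doing so is always weakly best for $c$ — it maximizes $c$'s score and denies a point to every opponent — so $c$'s score can be fixed at the number $k$ of voters in which $c$ is maximal. Among the remaining $n-k$ voters, each must award its point to some maximal element, necessarily different from $c$; the question is whether these points can be distributed so that no opponent $d \neq c$ accumulates more than $k$ of them. This is a feasibility question for an integral network flow (source to voters with unit capacities, voters to their maximal elements, each opponent to the sink with capacity $k$), solvable in polynomial time, and $c \in \PW(r,\P)$ iff a flow of value $n-k$ exists. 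The \veto case is analogous, keeping $c$ out of last place wherever possible and using a flow with lower bounds that forces each opponent into sufficiently many last places; the unique‑winner variants merely tighten the opponent capacities or demands by one.

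For (c), I would reduce an \NP‑complete combinatorial problem — in the literature, variants of Exact Cover by $3$‑Sets, $3$‑Dimensional Matching, or partition/scheduling problems — to the possible‑winner problem: elements and sets become candidates, and partial votes are designed whose linear extensions encode which sets are selected, so that the distinguished candidate $c$ can be made a winner exactly when a valid cover or matching exists. The real work, and the step I expect to be the main obstacle, is that this must be carried out for \emph{every} pure positional scoring rule except \plurality and \veto, so the construction proceeds by a case analysis on the shape of the scoring vector near its leading coordinates; pureness is precisely what guarantees that the relevant score gaps behave predictably as the number of candidates grows, letting a gadget calibrated for a fixed small scoring vector be lifted to all larger ones. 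Betzler and Dorn dispose of the bulk of the cases, but the $2$‑approval rule with vector $(1,1,0,\ldots,0)$ (and, symmetrically, $(1,\ldots,1,0,0)$) is notoriously stubborn — with only two distinct score values there is very little room to build selection gadgets — and was settled separately by the dedicated reduction of Baumeister and Rothe. Once \NP‑hardness is established for the ordinary winner, the unique‑winner version follows by minor surgery on the same instances (adding a padding candidate or adjusting multiplicities to break residual ties), which completes the classification.
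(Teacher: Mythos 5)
The paper itself does not prove Theorem~\ref{class-thm}: it is a classification imported from the cited literature (Konczak--Lang, Xia--Conitzer, Betzler--Dorn, Baumeister--Rothe), so there is no internal proof to compare against. Your reconstruction of the tractable parts is sound and coincides with what those works (and the paper's own algorithmic sections) actually do: the pairwise-competition decomposition for $\NW$, with the per-voter maximization of the score gap of an opponent over the distinguished candidate, is exactly the Xia--Conitzer argument that the paper recalls as Algorithm~\ref{alg:nw}, and your flow formulation for \plurality (place $c$ first wherever it is maximal, then route the remaining votes to maximal elements with per-opponent capacity) is the Betzler--Dorn algorithm the paper implements in Section~\ref{sec:pw-betzler-dorn}. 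The NP-membership observation and the remark that the unique-winner variants follow by switching to strict inequalities are also fine.

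The genuine problem is in part (c), which is where the entire content of the second bullet lies. First, your plan misstates how the case analysis is split in the literature: the case that Betzler and Dorn left open and that Baumeister and Rothe settled in the final step is the pure rule with scoring vectors $(2,1,\ldots,1,0)$ --- a three-valued rule --- not $2$-approval $(1,1,0,\ldots,0)$ or its mirror $(1,\ldots,1,0,0)$; hardness for those two-valued rules was already covered by Betzler and Dorn, and substantial parts of the hardness landscape (e.g.\ Borda and rules with many distinct score values) are due to Xia and Conitzer. Executing your roadmap as written would therefore have you hunting for the ``stubborn'' gadget in the wrong place. Second, and more importantly, for this theorem the case analysis over \emph{all} pure rules other than \plurality and \veto \emph{is} the proof; a plan that says ``reduce from X3C/3DM and calibrate gadgets using purity'' without exhibiting even one representative reduction (say, for $2$-approval or Borda), and without explaining how purity lets a fixed gadget be lifted to every scoring vector of the rule, leaves the NP-hardness claim asserted by citation rather than proved. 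That is acceptable only if the theorem is treated, as the paper treats it, as a quoted result; as a standalone proof attempt it has a missing core.
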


In the preceding theorem, the input partial voting profiles  consist of arbitrary partial orders.  There has been a growing body of work concerning the  complexity of the possible winners when the partial voting profiles are restricted to special types of partial orders. The main motivation for pursuing this line of investigation is to determine whether or not  the complexity of the possible winners drops from NP-complete to polynomial time w.r.t.\ some scoring rules (other than \plurality and \veto), if the input voting profiles consist of restricted partial orders that also arise naturally in real-life settings. We now describe two types of restricted partial orders and state  relevant results.

\begin{definition} \label{restr-order-defn}
Let $\succ$ be a partial order on a set $C$.
\begin{itemize}

  \item We say that $\succ $ is a \emph{partitioned preference} if $C$ can be partitioned into disjoint subsets $A_1, \ldots, A_q$ so that the following hold:
    
    (a) for all $i < j \leq  q$, if $c \in A_i$ and $c' \in  A_j$,  then $c \succ c'$;
    
    (b)  for each $i \leq q$, the elements in $A_i$ are incomparable under $\succ$, that is to say,    $a \nsucc b$ and $b \nsucc a$ hold, for all $a, b \in A_i$.
    \item We say that $\succ$ is a \emph{partial chain} if it consists of  a linear order on a non-empty subset $C'$ of $C$.
    
        \end{itemize}
        
        \end{definition}
        
Partitioned preferences relax the notion of a total order by requiring that there is a total order between sets of incomparable elements. As pointed out by Kenig \cite{DBLP:conf/atal/Kenig19}, partitioned preference ``were shown to be common in many real-life datasets, and have been used for learning statistical models on full and partial rankings.'' Furthermore, partitioned preferences contain \emph{doubly-truncated ballots} as a special case, where there is a complete ranking of  top elements, a complete  ranking of bottom elements, and all remaining elements between the top and the bottom elements are pairwise incomparable. This models, for example, the setting in which a voter has complete rankings  of some top candidates and of some bottom candidates, but is indifferent about the remaining candidates in the "middle". Partial chains arise in settings where there is a large number of candidates, but a voter has knowledge of only a subset of them. For example, a voter may have a complete ranking of movies that the voter has seen, but, of course, does not know how to compare these movies with movies that the voter has not seen. Partial chains also model the setting of an election in which one or more candidates enter the race late, and so a voter has a complete ranking of the original candidates but does not know yet how to rank the new candidates who entered the race late.

Let $r$ be a pure positional scoring rule.  We write \PWPV$(r)$ to denote the restriction of the possible winners problem w.r.t.\ $r$  to partial voting profiles consisting of partitioned preferences. More formally,  \PWPV$(r)$ is the following decision problem:   given a partial voting profile $\P$ consisting of partitioned preferences and a candidate $c$, is $c$ a possible winner w.r.t.\ $r$ and $\P$? Similarly, we write \PWPC$(r)$ to denote for the restriction of the possible winners problem w.r.t.\ $r$ to partial voting profiles consisting of partial chains.

Kenig \cite{DBLP:conf/atal/Kenig19} established a nearly complete classification of the complexity of the \PWPV$(r)$ problem for pure positional scoring rules. In particular, if $r$ is the $2$-approval rule, then \PWPV$(r)$ is solvable in polynomial time. In fact, the tractability of $\PWPV(r)$ holds for all $2$-\emph{valued} rules, that it, positional scoring rules in which the scoring vectors contain just two distinct values. If $r$ is the Borda rule, however, then \PWPV$(r)$ is NP-complete. In fact,  results in \cite{betzler2011unweighted,davies2011complexity} imply that the possible winners problem w.r.t.\ the Borda rule is NP-complete, even when restricted to input partial voting profiles consisting of doubly truncated ballots. As regards partial chains, it was shown recently in \cite{chakraborty2020complexity} that the classification in theorem \ref{class-thm} does not change for the \PWPC$(r)$ problem. In other words, for every positional scoring rule $r$ other than \plurality and \veto, \PWPC$(r)$ is NP-complete. In particular, \PWPC$(r)$ is NP-complete if $r$ is the 2-approval rule or the Borda rule.

Our experimental evaluation will focus on the \plurality rule,  the $2$-approval rule, and the Borda rule. For this reason, we summarize the aforementioned complexity results concerning these rules in the following table (also listing the veto rule for completeness).

\begin{table}[ht]
\centering
\begin{tabular}{l|ccc|c}  
\toprule
Scoring Rule  & $\PW$ & \PWPV & \PWPC  & $\NW$ (all kinds)\\
\toprule 
Plurality \& Veto& \Ptime  & \Ptime &  \Ptime  &  \Ptime    \\
\midrule
$2$-approval & NP-complete   & \Ptime  &  NP-complete  & \Ptime  \\
\midrule
Borda & NP-complete   & NP-complete  &  NP-complete    & \ \Ptime  \\
\bottomrule 
\end{tabular}
\newline 
\caption{Complexity of the possible winners  (PW) and necessary winners  (NW) problems, and their restrictions to  partitioned preferences (\PWPV) and partial chains (\PWPC) w.r.t.\ \plurality, \veto, $2$-approval, and Borda rules.}
\label{tab:results}
\end{table}

\eat{
\colorOne
The classification in Theorem \ref{class-thm} does not change for the \PWPC~problem \cite{chakraborty2020complexity}. But the landscape changes significantly for the other restrictions of the \PW~problem. Before enumerating the results for these problems, we present a notation for a family of rules.
\\
Let $f$ and $l$ be two positive integers ($f$ for``first" and $l$ for ``last"). We write $R(f,l)$ to denote the $3$-valued rule with scoring vectors
${\bf s}_m= (\underbrace{2, \hdots, 2}_{f},\underbrace{1, \hdots, 1}_{m-f-l}, \underbrace{0, \hdots, 0}_l)$. Note that  $R(1,1)$ is the rule $(2,1,\ldots,1,0)$.
\\
The \PWPV~problem has been explored in great detail in \cite{DBLP:conf/atal/Kenig19}. The rule $R(1,1)$ and all $2$-valued rules, perhaps surprisingly, are in \Ptime~for \PWPV. 
For all other rules, except $R(f,l)$, where $f+l > 2$, \PWPV~is \NP-complete. We summarise the known results as follows.
\kunal{We need such theorems for other types of dataset as well (e.g., what is known about using, say, a 4-valued voting rule on partial chain dataset (PW-PC)?). Also, we need to finalize the voting rules we use for experiments given the new development of the discussion on "Grouping of Pure Positional Scoring Rules". Vishal: The existing rules we have , are fine. The above grouping is a standard one. We had to add it to write up the results.}

\begin{theorem}\label{kenig-thm} 
{\rm \cite{DBLP:conf/atal/Kenig19,dey2016exact}}
Let $r$ be a pure positional scoring rule. Then the following statements hold.
\begin{itemize}
    \item If $r$ is $2$-valued or if $r$ is the rule $R(1,1)$, then  the \emph{\PWPV} \ problem is in \emph{\Ptime}.
    
    \item If $r$ is a differentiating rule, then the \emph{\PWPV} \ problem is in \emph{\NP}-complete.
    
    \item If $r$ is a non-differentiating $p$-valued rule, where $p \geq 3$, other than $R(f,l)$ with $f+l>2$, then the \emph{\PWPV} \ problem w.r.t.\ $r$ is \emph{\NP}-complete.
    
    \item If $r$ is an non-differentiating rule, such that all scoring vectors have at least four distinct values, then the \emph{\PWPV} \ problem w.r.t.\ $r$ is \emph{\NP}-complete.
\end{itemize}
\end{theorem}
The complexity of the \PWPV~problem remains open for the rules  $R(f,l)$ with $f+l>2$. The \PWDV~problem is a special case of \PWPV. Consequently, the \Ptime~results in Theorem \ref{kenig-thm} hold for \PWDV~too. 
Note that the \Ptime~result for $2$-valued rules generalises an earlier result in \cite{baumeister2012campaigns}, which established that the \PWDV~problem \ w.r.t.\ $t$-approval is in \Ptime. 
Moreover, the partial profile constructed in the NP-hardness proof for all differentiating rules in \cite[Lemma 6]{dey2016exact} and  for $p$-valued rules with $p \geq 4$ in \cite[Lemma 13]{DBLP:conf/atal/Kenig19} has only doubly-truncated ballots. Therefore, the \PWDV~problem w.r.t.\ $p$-valued rules with  $p \geq 4$ is also \NP-complete. We further note that the NP-hardness for non-differentiating unbounded rules with scoring vectors containing at least four distinct values is obtained as a corollary to \cite[Lemma 14]{DBLP:conf/atal/Kenig19}. The proof of this lemma implicitly uses doubly-truncated profiles. This is a generalisation of an earlier result  in \cite{betzler2011unweighted,davies2011complexity}, which established that the \PWDV~problem is \NP-complete for \Borda count. 
\begin{theorem} \label{thm:class_PWDV} 
\emph{\cite{betzler2011unweighted,davies2011complexity,baumeister2012campaigns,dey2016exact,DBLP:conf/atal/Kenig19,chakraborty2020complexity}}
The following are true.
    \begin{itemize}
        \item If $r$ is a $2$-valued rule or $r$ is  the rule $R(1,1)$, then the \emph{\PWDV}~problem w.r.t.\ $r$ is in \emph{\Ptime}.
        
        \item If $r$ is a $3$-valued rule other than $R(f,l)$ with $f+1 > 2,$ or $r$ is a $p$-valued rule, where $p \geq 4$, or $r$ is an unbounded rule with scoring vectors containing at least four distinct score values,  then the \emph{\PWDV}~problem w.r.t.\ $r$ is \emph{\NP}-complete.
        
        \item  Let $r$ be an unbounded rule that satisfies one of the following conditions:
    \begin{enumerate}
        \item  for all $u$, there exists a polynomial $g(u)$ with the property that all scoring vectors $\boldsymbol{s}_m$ of $r$, with length $m = g(u)$, have $m' \geq 3$ distinct score values, and if the three smallest score values are $a_{m'-2} > a_{m'-1} > a_{m'}$ , it holds that $m - \ell (m, m'-1) - \ell (m,m') \geq 3u$.
        
        \item for all $u$, there exists a polynomial $g(u)$ with the property that all scoring vectors of length $m = g(u)$ have at least three distinct score values, and if the largest three score values are $a_1 > a_2 > a_3$ , it holds that $m - \ell (m, 1) - \ell (m,2) \geq 3u$.
    \end{enumerate} 
    Then the \emph{\PWDV}~problem w.r.t.\ $r$ is \emph{\NP}-complete.
    \end{itemize}
\end{theorem}
Since \PWTV~and \PWBV~are special cases of \PWDV, the \Ptime~results in Theorem \ref{thm:class_PWDV} holds for \PWTV~and \PWBV. The known results for these two problems can be summarised as follows.
\begin{theorem}
\label{thm:class_TB}
\emph{\cite{chakraborty2020complexity}}
The following are true.
    \begin{itemize}
        \item If $r$ is a $2$-valued rule or $r$ is  the rule $R(1,1)$, then the \emph{\PWBV}~and \emph{\PWTV} problem w.r.t.\ $r$ is in \emph{\Ptime}.
        
        \item  Let $r$ be a rule that satisfies the following. For all $u$, there exists a polynomial $g(u)$ with the property that all scoring vectors $\boldsymbol{s}_m$ of $r$, with length $m = g(u)$, have $m' \geq 3$ distinct score values, and if the three smallest score values are $a_{m'-2} > a_{m'-1} > a_{m'}$ , it holds that $m - \ell (m, m'-1) - \ell (m,m') \geq 3u$. The \PWDV~problem w.r.t.\ $r$ is NP-complete.
        
        \item Let $r$ be a rule that satisfies the following. For all $u$, there exists a polynomial $g(u)$ with the property that all scoring vectors of length $m = g(u)$ have at least three distinct score values, and if the largest three score values are $a_1 > a_2 > a_3$ , it holds that $m - \ell (m, 1) - \ell (m,2) \geq 3u$. The \PWDV~problem w.r.t.\ $r$ is NP-complete.

    Then the \emph{\PWDV}~problem w.r.t.\ $r$ is \emph{\NP}-complete.
    \end{itemize}
\end{theorem}
\colorTwo
}

\eat{Given a partial profile $V$ on $\mathcal{C}$, candidate $c$ is a \emph{possible winner} (PW) if there exists $V \hookrightarrow P$ such that $c \in r(P)$; if $c = r(P),$ then $c$ is the \emph{possible unique winner} (PuW). If for all $V \hookrightarrow P$, $c \in r(P)$, then  $c$ is called a \emph{necessary winner} (NW); if $c = r(P)$ then $c$ is the \emph{necessary unique winner} (NuW).

\subsection{Problem Definition}
The corresponding decision problems of the necessary and possible winners are defined as follows:\\
\underline{Necessary Winner}
$(\mathcal{C},  V, c_w \in \mathcal{C}$.
$)$
\\
\textbf{Question:} Is for all $V \hookrightarrow P$, $c_w \in r(P)$ ?
\\
\\
\underline{Possible Winner}
$( \mathcal{C}, V, c_w \in \mathcal{C}$.
$)$
\\
\textbf{Question:} Does there exist a $V \hookrightarrow P$, such that $c \in r(P)$?
\\
\\
The decision problems for the PuW and the NuW cases are defined similarly.
}

\eat{
\begin{table}
\colorOne
{\caption{\colorOne A summary of computational complexity of necessary winner and possible winner determination problems. 
``Linear'', ``partitioned'', and ``RSM'' (Repeated Selection Model) are types of partial profiles.
`P' denotes that under the assumption P $\neq$ NP, there exists a polynomial time algorithm. `\npcomshort' denotes \npcom. `Thm' means theorem. \kunal{We may need a theorem showing NP-completeness of using RSM is due to the NP-completeness of PW problem.}}\label{tab:complexityResults}}
\begin{tabular}{|c|c|c|c|c|c|c|}
\hline
\rule{0pt}{12pt}
\rule{0pt}{12pt}
Positional&\multicolumn{3}{c|}{Necessary winner}&\multicolumn{3}{c|}{Possible winner}\\
\cline{2-7}
scoring rule&linear&partitioned&RSM&linear&partitioned&RSM\\
\hline
\hline
plurality&\multicolumn{3}{c|}{P \cite{DBLP:journals/jair/XiaC11}}&\multicolumn{3}{c|}{P \cite{DBLP:journals/jair/XiaC11}}\\
\hline
$t$-approval ($t < 4$)&\multicolumn{3}{c|}{P \cite{DBLP:journals/jair/XiaC11}}&open&P [Thm \ref{top-k-thm}]&\multirow{3}{*}{\npcomshort \cite{DBLP:journals/jair/XiaC11}}\\
\cline{1-6}
$t$-approval ($t \geq 4$)&\multicolumn{3}{c|}{P \cite{DBLP:journals/jair/XiaC11}}&\npcomshort [Thm \ref{linear-thm}]&P [Thm \ref{top-k-thm}]&\\
\cline{1-6}
Borda&\multicolumn{3}{c|}{P \cite{DBLP:journals/jair/XiaC11}}&\npcomshort [Thm \ref{linear-thm}]&\npcomshort [Thm \ref{top-k-thm}]&\\
\hline
\end{tabular}

\end{table}
}

\eat{

\begin{table}[ht]
\centering
\begin{tabular}{l|cccccc|c}  
\toprule
\textbf{Scoring Rule}  & \PW & \PWPC & \PWPV & \PWDV & \PWTV & \PWBV & NW (all kinds)\\
\midrule
\textbf{Theorem} & (\ref{class-thm}) &  & (\ref{kenig-thm}) & (\ref{thm:class_PWDV}) & (\ref{thm:class_TB}) &(\ref{thm:class_TB}) &(\ref{class-thm})\\
\toprule\midrule
Plurality \& Veto& \Ptime  & \Ptime &  \Ptime  &  \Ptime  & \Ptime & \Ptime & \Ptime   \\
\midrule
$2$-valued & NP-c   & \textbf{NP-c}  &  \Ptime   & \Ptime &  \Ptime   & \Ptime  & \Ptime \\
\midrule\midrule
$R(1,1)$ & NP-c  & \textbf{NP-c} &  \Ptime  &  \Ptime &  \Ptime   & \Ptime  & \Ptime   \\
\midrule
$R(f,l)$, $f+l>2$ & NP-c  & \textbf{NP-c} &  ?  &  ? &  ?  &  ?   & \Ptime  \\
\midrule
All other $3$-valued            & NP-c   & \textbf{NP-c}  &  NP-c   & NP-c   & \textbf{NP-c}$^*$ & \textbf{NP-c}$^*$    & \Ptime  \\
\midrule\midrule
$p$-valued,  $p \geq 4$           & NP-c   & \textbf{NP-c}  & NP-c   & NP-c  & \textbf{NP-c}$^*$  & \textbf{NP-c}$^*$  & \Ptime   \\
\midrule\midrule
Unbounded rules & NP-c   & \textbf{NP-c}  &  NP-c    & \textbf{NP-c}$^*$  & \textbf{NP-c}$^*$  & \textbf{NP-c}$^*$ & \Ptime  \\
\bottomrule 
\end{tabular}
\newline 
\caption{Classification of the possible winner problem (PW) and necessary winner problem (NW), and their various restrictions (\PWPC: PW Partial Chain, \PWPV: PW Partitioned, \PWDV:  PW Doubly-truncated, \PWTV: PW Top-truncated and \PWBV: PW Bottom-truncated).
\newline $*$See Thm. \ref{thm:class_PWDV} for restrictions.
}
         
\label{tab:results}
\end{table}

}
\colorTwo
 
\section{Necessary Winners}
\label{sec:nw}

Xia and Conitzer~\cite{DBLP:journals/jair/XiaC11} presented a polynomial-time algorithm for determining whether {\em a particular candidate} $c$ is a necessary winner (NW) in an election that uses a positional scoring rule $r$, that is, whether $c \in \NW(r,\P)$.  We recall it in Algorithm~\ref{alg:nw}.  We will then present several performance optimizations that allow us to efficiently compute {\em the set} $\NW(r,\P)$ of necessary winners.  

For a partial order $P \in \P$ and a candidate $c \in C$, we let $\textsc{Up}_{P}(c)= \{c'\in C | c' \succeq_{P} c\}$ and   $\textsc{Down}_{P}(c) = \{c'\in C| c' \preceq_{P} c\}$. (Note that both $\textsc{Up}_{P}(c)$ and $\textsc{Down}_{P}(c)$ include $c$.) Further, for a pair of candidates $c$ and $w$ with  $c \succ_{P} w$, we write 
$\textsc{Block}_{P}(c,w)= \textsc{Down}_{P}(c) \cap \textsc{Up}_{P}(w)$ for the set of candidates ranked between $c$ and $w$, including $c$ and $w$.

\begin{algorithm}[h!]
	\caption{$\checkNW(c, \P, r)$}
	\begin{algorithmic}[1]
	\FOR {each partial order $P \in \P$}
	\FOR {each candidate $c' \in C$}
	\STATE compute \up($c'$) and \down($c'$)
	\ENDFOR 
	\ENDFOR
	\FOR {each $w \in C \setminus c$}
	\STATE Initialize $S(w)=S(c)=0$
	\FOR {each partial order $P \in \P$}
	\IF { $c \not \succ_{P} w$}
    \STATE $pos_c = (m - \left|\textsc{Down}_{P}(c)\right| + 1)$ is the lowest possible position for $c$
    \STATE $pos_w = \left|\textsc{Up}_{P}(w)\right|$ is the highest possible position for $w$
    \STATE  $S(c) = S(c) + r(m,pos_c)$ 
    \STATE  $S(w) = S(w) + r(m,pos_w)$ 
    \ELSIF{$c \succ_{P} w$}
    \STATE slide  $\textsc{Block}_{P}(c,w)$ between positions $\left| \textsc{Up}_{P}(w) \setminus \textsc{Down}_{P}(c) \right| +1$ and $m-\left| \textsc{Down}_{P}(c) \setminus \textsc{Up}_{P}(w) \right|$, find positions $pos_c$ and $pos_w$ that minimize $r(m,pos_c) - r(m,pos_w)$
    \STATE  $S(c) = S(c) + r(m,pos_c)$
    \STATE  $S(w) = S(w) + r(m,pos_w)$ 
	\ENDIF 
	\ENDFOR
	\IF {$S(w) > S(c)$}\label{nw:check}
	\STATE $c$ is not a necessary winner, return \FALSE
	\ENDIF 
	\ENDFOR 
	\STATE $c$ is a necessary winner, return \TRUE 
	\end{algorithmic}
	\label{alg:nw}
\end{algorithm}

Note that Algorithm~\ref{alg:nw} returns true if $c$ is a necessary winner, not only if it's a necessary \emph{unique} winner.  To return true only if $c$ is the necessary unique winner, line~\ref{nw:check} should be replaced by $S(w) \geq S(c)$.



We now present several performance optimizations that allow us to efficiently compute {\em the set} $\NW(r,\P)$ of necessary winners.   Our optimizations are of two kinds.  The first kind is based on reusing computation across candidates, and on heuristically re-ordering computation.  The second kind uses the structure of a given partial voting profile to optimize the computation of $\textsc{Up}_{P}(c)$ and  $\textsc{Down}_{P}(c)$. 

\paragraph{Reusing and reordering computation.}  A straight-forward way to use Algorithm~\ref{alg:nw} to compute $\NW(r,\P)$ is to execute it $m$ times, once for each candidate.  

To eliminate redundant computation, we first compute and record the $\textsc{Up}_{P}(c)$ and  $\textsc{Down}_{P}(c)$ of each $P$ and $c$ once. We will explain how to compute $\textsc{Up}_{P}(c)$ and  $\textsc{Down}_{P}(c)$ efficiently later in this section.  Additionally, we compute and record the best possible score of each candidate, $S_{max}(c) = \textsc{argmax}_{P \in \P}~r(m,\left|\textsc{Up}_{P}(c) \right|)$.

Next, we execute competitions between pairs of candidates $c$ and $w$, deliberately selecting only the promising candidates as $c$, and prioritizing strong opponents $w$.   Specifically, only the candidates that have the highest $S_{max}(c)$ can become necessary winners.  Further, we sort potential opponents in decreasing order of $S_{max}(w)$. 

\paragraph{Computing $\textsc{Up}_{P}(c)$ and  $\textsc{Down}_{P}(c)$} This part of the computation takes polynomial time, but the details of this computation are left unspecified by Xia and Conitzer.  In our implementation, we use BFS algorithm to compute these sets for all candidates of a given partial profile $P$, maintaining intermediate results in a priority queue. 

We also observe that the structure of $P$ can be used to make this computation more efficient in some common cases.  In particular, $\textsc{Up}_{P}(c)$ and  $\textsc{Down}_{P}(c)$ can be computed in $O(m)$ time for {\em linear forests} (i.e., posets in which every candidate has at most one parent and at most one child) and for {\em partitioned preferences}, where candidates are partitioned into $q$ sets $C = A_1 \cup \ldots \cup A_q$, and where $P$ provides a complete order over the sets but does not compare candidates within a set.  A common example of partitioned preferences are top-$k$ preferences, where the first $k$ sets are of size 1, and the final set is of size $m-k$. Alternatively, $\textsc{Up}_{P}(c)$ and  $\textsc{Down}_{P}(c)$ computation can be avoided altogether in these cases, since scores of $S(c)$ and $S(w)$ that minimize $S(c) - S(w)$ can be determined directly.

\paragraph{In summary,} while the optimizations described in this section do not reduce the asymptotic running time of the already polynomial $\NW(r,\P)$ computation in the general case, they are useful in practice, as we will demonstrate experimentally in Section~\ref{sec:exp:nw} .  As we explain in the next section, we use these and similar techniques to optimize  the performance of $\PW(r,\P)$, making this computation practically feasible.
\section{Possible Winners} 
\label{sec:possible}

\subsection{Computing PW for Plurality and Veto}
\label{sec:pw-betzler-dorn}

By Theorem \ref{class-thm}, for  Plurality  and Veto, there are  polynomial-time algorithms for telling if a given candidate is a possible winner. In fact, Betzler and Dorn \cite{DBLP:journals/jcss/BetzlerD10} gave such an algorithm for plurality  by efficiently transforming the detection of possible winners to a network flow problem with just two layers and with integral capacities along the edges of the network. We have implemented and optimized this algorithm by, among other things, eliminating obvious winners (candidates ranked first in over half of the partial orders in $\P$) and obvious losers (candidates ranked first in fewer than $1/m$ partial orders), thus reducing the size of the network.  A variant of this algorithm can be used to detect possible winners for veto.

\subsection{Reducing PW to ILP} \label{pw-ilp}
Again by Theorem \ref{class-thm}, for all positional scoring rules other than plurality and veto, detecting possible winners is a NP-complete problem. Here, we give a polynomial-time reduction of the Possible Winners problem to Integer Linear Programming (ILP) and, in fact, to 0-1 ILP. Let $r$ be a positional scoring rule and let ${\bf s}=(s_1,\ldots,s_m)$ be its scoring vector for $m$ candidates. Consider an input to the possible winners problem consisting of a set $C=\{c_1,\ldots,c_m\}$ of candidates, a partial voting profile ${\P}=(P_1,\ldots, P_n)$, and a distinguished candidate $c_w$ from $C$; the question is whether or not $c_w \in \PW(r,{\P})$.

\begin{itemize}
    \item For each $l$ with $1\leq l\leq n$ and each $i$ with $1 \leq i \leq m$, introduce $m$ binary variables $x^{l,i}_1, x^{l,i}_2, x^{l,i}_3, \dots, x^{l,i}_m$.  Intuitively,  we want to have $x^{l,i}_j = 1$ if  candidate $c_i$ has rank $j$ in a completion $T_l$ of $P_l$; otherwise, $x^{l,i}_j = 0$. Thus, the  rank of $c_i$ in   $T_l$
    is equal to $\sum_{p=1}^{m} p \cdot  x^{l,i}_{p}.$ 

    \item There are two constraints to ensure the validity of a completion $T_l$ of $P_l$ , namely, each candidate is assigned exactly one rank in $T_l$, and no two candidates  are assigned the same rank in $T_l$.
    
    \begin{align}
    &  \sum_{p=1}^{m} x^{l,i}_p = 1, \text{where}~1\leq l\leq n ~ \text{and}~  1\leq i\leq m 
                                     \label{eqn::genPWtoILP_cndCstr}  \\
    &  \sum_{c_i \in C} x^{l,i}_p =1, \text{where}~1\leq l\leq n ~ \text{and}~  1\leq p\leq m. 
                                     \label{eqn::genPWtoILP_posCstr}
    \end{align}
 
    \item If a candidate  $c_i$ is ranked higher than a candidate $c_j$ in the partial order $P_l$, then $c_i$ has to also be ranked higher in a completion $T_l$ of $P_l$. This is ensured by introducing the following constraint for each such pair of candidates and each partial order.

    \begin{align}
        \sum_{p=1}^{m} p \big( x^{l,j}_{p} - x^{l,i}_{p} \big) > 0.
        \label{eqn::genPWtoILP_ppCstr}
    \end{align}

    \item Finally, to ensure that the distinguished candidate $c_w$ is a possible winner, we add, for each candidate $c_i\not = c_w$, the following constraint:
    \begin{align}
        \sum_{l=1}^{n} \sum_{p=1}^{m} s_{p} \cdot x^{l,i}_p  \leq  \sum_{l=1}^{n} \sum_{p=1}^{m} s_{p} \cdot x^{l,w}_p. 
        \label{eqn::genPWtoILP_pwCstr}
    \end{align}
\end{itemize}

Let  $\Sigma$ be the preceding ILP instance. Note that $\Sigma$  has $O(m^2n)$ binary variables and $O(m^2n)$ constraints. 

Note also that for the  case of possible unique winners, one has $s({\T}, c_i) < s({\T}, c_w).$ Thus, the only change needed is to replace the inequality in (\ref{eqn::genPWtoILP_pwCstr}) by a strict one.

We want to show that a 0-1 solution to $\Sigma$ exists if and only if candidate $c_w$ is a possible winner. We first prove a few facts about the reduction and then prove the desideratum. In the following, for a set $C$, we let $\Pi(C)$ denote the set of all total orders on $C$. We also let $\pi: \Pi(C) \times C \longrightarrow [1, \ldots, m]$ be the \emph{ranking} function that returns the rank of $c' \in C$ in a total order on $C$.

\begin{lemma}
\label{gen_fact_1}
Let $C = \{ c_1, \hdots, c_m \}$ and $\P = \{P_1, \hdots, P_n\}$ a set of partial votes. For each partial vote $P_l$, let $T_l  \in \Pi(C)$ be a total order that extends $P_l$. Consider the following
\begin{align}
    a_{p}^{l,i} = 
    \begin{cases}
    1,& \text{if } \pi(P_l, c_i) = p\\
    0,              & \text{ otherwise }
\end{cases}
\label{const_assign_a_gen}
\end{align}
The values $a_{p}^{l,i} , 1 \leq i \leq m, 1 \leq p \leq m$ have the following properties:
\begin{enumerate}
    \item $\sum_{p=1}^{m} a_{p}^{l,i} = 1$;
    \label{gen_val_prop_1}
    \item For $p = 1,  \hdots, m$, we have that  $\sum_{i=1}^{n} a^{l,i}_{p} = 1$.
     \label{gen_val_prop_2}
\end{enumerate}
\end{lemma}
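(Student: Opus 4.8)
The plan is to observe that both properties follow immediately from the single fact that the ranking induced by a total order is a bijection between $C$ and $\{1,\dots,m\}$. (I read the definition~\eqref{const_assign_a_gen} as using $\pi(T_l,c_i)$, the rank of $c_i$ in the completion $T_l$, since $\pi$ is defined only on total orders; correspondingly, the sum in property~\ref{gen_val_prop_2} is meant to range over $i=1,\dots,m$.)

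For property~\ref{gen_val_prop_1}, fix $l$ and $i$ and set $p_0=\pi(T_l,c_i)$. Because $T_l$ is a total order on $C$, the candidate $c_i$ occupies exactly one position, so $p_0$ is well defined with $1\le p_0\le m$; by definition $a_{p_0}^{l,i}=1$ while $a_p^{l,i}=0$ for every $p\neq p_0$, hence $\sum_{p=1}^m a_p^{l,i}=1$. For property~\ref{gen_val_prop_2}, fix $l$ and $p$ and use that the map $c\mapsto\pi(T_l,c)$ is a bijection of $C$ onto $\{1,\dots,m\}$: it is injective because $T_l$, being antisymmetric and total, never places two candidates in the same position, and it is surjective because $|C|=m$. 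Thus there is a unique $c_{i_0}$ with $\pi(T_l,c_{i_0})=p$, so $a_p^{l,i_0}=1$ and $a_p^{l,i}=0$ for all $i\neq i_0$, which gives $\sum_{i=1}^m a_p^{l,i}=1$.

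There is no genuine obstacle here; the only point worth spelling out is precisely why ``rank in a total order'' yields a bijection, since that is what underwrites the column-sum identity in property~\ref{gen_val_prop_2}. I expect this lemma to feed directly into the correctness argument for the reduction: setting $x_p^{l,i}:=a_p^{l,i}$ makes properties~\ref{gen_val_prop_1} and~\ref{gen_val_prop_2} coincide with constraints~\eqref{eqn::genPWtoILP_cndCstr} and~\eqref{eqn::genPWtoILP_posCstr} of $\Sigma$, supplying one half of the equivalence ``$c_w$ is a possible winner iff $\Sigma$ has a $0$--$1$ solution,'' with the pairwise and scoring constraints~\eqref{eqn::genPWtoILP_ppCstr} and~\eqref{eqn::genPWtoILP_pwCstr} handled in a companion lemma.
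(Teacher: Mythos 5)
Your proof is correct and follows essentially the same route as the paper's: both reduce the two properties to the fact that the rank function of the total order $T_l$ is a bijection between candidates and positions (and you rightly read the lemma's $\pi(P_l,c_i)$ as $\pi(T_l,c_i)$ and the index bound $n$ in property~\ref{gen_val_prop_2} as $m$, which are typos in the statement that the paper's own proof also implicitly corrects).
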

\begin{proof}

These properties are a consequence of the way the ranking function $\pi: \Pi(C) \times C \longrightarrow [1, \hdots, m] $. For $1 \leq i \leq m$, in a completion $T_l$, of the partial vote $P_l$, each candidate is assigned a unique rank. Therefore, for a $T_l \in \Pi(C)$, for a fixed candidate $c \in C,$ there is exactly one $k \in \{1, \hdots, m \}$ such that $\pi(T_l, c) = k.$ By definition of the values $a^{l,i}_p$, only $a^{l,i}_k = 1;$ the others are $0.$ Thus, $\sum_{p=1}^m a^{l,i}_p = 1.$
\\
For property (\ref{gen_val_prop_2}), observe that $\pi$ assigns to each rank, an unique candidate. 
In the extension $T_l$, of a given vote $P_l,$ for a fixed  $k \in \{1, \hdots, m\}$,  there exists exactly one $c \in C$ such that $k \mapsfrom (T_l, c).$ 
Therefore, for $p = 1, \hdots, m$, we have that  $\larger{\sum_{i=1}^{i} a^{l,i}_p = 1}.$ 
\end{proof}

\begin{lemma}
\label{fact_val_prop}
Suppose $b_{p}^{l,i} \ 1 \leq i \leq n, 1 \leq p \leq m$ are non negative integers such that
\begin{enumerate}
    \item $b_{p}^{l,i} \in \{0,1\}$
    \label{val_prop_1}
   \item $\sum_{p=1}^{m} b_{p}^{l,i} = 1$
   \label{val_prop_2}
    \item For $p = 1, \hdots, m,$ we have $\sum_{i=1}^{n} b^{l,i}_{p} = 1$
    \label{val_prop_3}
\end{enumerate}
Let $\pi: \Pi(C) \times C \longrightarrow [1, \hdots, m]$ such that $\pi(T_l, c_i) = k_i$ if and only if $b_{k_i}^{l,i} = 1$. This induces a total order on $C.$
\end{lemma}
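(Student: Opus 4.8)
The plan is to exhibit the map $f:C\to\{1,\ldots,m\}$ that sends $c_i$ to the index $k_i$ with $b_{k_i}^{l,i}=1$, show that $f$ is a bijection, and then simply pull back the natural linear order on $\{1,\ldots,m\}$: once $f$ is known to be a bijection, the relation defined by $c_i\succ_{T_l}c_j$ iff $f(c_i)<f(c_j)$ automatically inherits irreflexivity, antisymmetry, transitivity, and totality from $<$ on $\{1,\ldots,m\}$, which is exactly the assertion that $\pi(T_l,\cdot)$ induces a total order on $C$.

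First I would verify that $f$ is a well-defined function. Property~\ref{val_prop_1} says each $b_p^{l,i}\in\{0,1\}$ and property~\ref{val_prop_2} says $\sum_{p=1}^m b_p^{l,i}=1$; together these force exactly one position $p$ with $b_p^{l,i}=1$ for each candidate $c_i$, so $k_i=\pi(T_l,c_i)$ is uniquely determined and $f(c_i)=k_i$ makes sense. Next I would show $f$ is injective: if $f(c_i)=f(c_j)=p$ with $i\neq j$, then $b_p^{l,i}=b_p^{l,j}=1$, so the column sum over candidates at position $p$ is at least $2$, contradicting property~\ref{val_prop_3}. Since $f$ is an injection between two $m$-element sets, it is a bijection; alternatively, surjectivity is immediate from property~\ref{val_prop_3}, which guarantees for each position $p$ some candidate $c_i$ with $b_p^{l,i}=1$, i.e.\ $p=f(c_i)$.

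The remaining step is the transport of structure described above, which is routine: because $f$ is a bijection onto the linearly ordered set $\{1,\ldots,m\}$, the pullback relation $c_i\succ_{T_l}c_j \iff f(c_i)<f(c_j)$ is a total order on $C$, and by construction $\pi(T_l,\cdot)$ is precisely its ranking function. I do not expect any genuine obstacle here; the only points requiring a moment's care are the two elementary combinatorial observations — that the $0$-$1$ condition together with the row-sum condition yields a single $1$ in each row (well-definedness of $f$), while the column-sum condition rules out two $1$'s in any column (injectivity of $f$) — and the standard remark that an injection between finite sets of equal size is a bijection. (One should also read the index $i$ in conditions~\ref{val_prop_2} and~\ref{val_prop_3} as ranging over the $m$ candidates, matching the setup of Lemma~\ref{gen_fact_1}.)
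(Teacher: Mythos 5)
Your proposal is correct and follows essentially the same route as the paper's proof: both arguments use conditions~(1) and~(2) to show each candidate is assigned exactly one position and conditions~(1) and~(3) to show each position is occupied, concluding that the assignment is a bijection whose pullback of the order on $\{1,\ldots,m\}$ is the desired total order. You merely spell out the injectivity step and the transport of structure (and correctly note the index typo $1\leq i\leq n$ versus $1\leq i\leq m$) that the paper leaves implicit.
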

\begin{proof}
Fix a $T_l \in \Pi(C)$. Conditions (\ref{val_prop_1}) and (\ref{val_prop_2}) ensure that each $c_i \in C$ is mapped to exactly one $p \in \{1, \hdots, m\}.$ Conditions (\ref{val_prop_1}) and (\ref{val_prop_3}) ensure that for each $p \in \{1,  \hdots, m\},$ some $c_i \in C$ is mapped to it. Thus, $\pi$ is a one-to-one-correspondence.
\end{proof}
Observe that when $a^{l,i}_p$'s are defined as in Equation~(\ref{const_assign_a_gen}), 
$$\pi(T_l, c_i) = \sum_{p=1}^{m} p \cdot a^{l,i}_p .$$
Furthermore, for a scoring vector $\boldsymbol{s} = (s_1,\hdots, s_m),$ the score of $c_i$ in $T_l$ is
$$s(T_l, c_i) = \sum_{p=1}^{m} s_{p} \cdot a^{l,i}_p .$$
\begin{lemma}
\label{gen_fact_3}
Let $P_l$ be a partial order on $C$ and $T_l$ be a total order on $C.$ Suppose that we have values
$a_{p}^{l,i}$ as defined as in Equation~(\ref{const_assign_a_gen}). For all $c_i > c_j$ in $P_l$ we have the following inequalities for $T_l$
$$
    \larger{ \sum_{p=1}^{m}} p(a^{l,j}_p - a^{l,i}_p) \ \larger{ >} \ 0
$$
if and only if $P_l \hookrightarrow T_l.$
\end{lemma}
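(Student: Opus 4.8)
The plan is to reduce the displayed inequality to a statement purely about ranks in $T_l$, and then to observe that this statement is, word for word, the definition of ``$T_l$ extends $P_l$.''

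First I would invoke the identity recorded just before the lemma: when the values $a^{l,i}_p$ are defined as in Equation~(\ref{const_assign_a_gen}), we have $\sum_{p=1}^{m} p \cdot a^{l,i}_p = \pi(T_l,c_i)$, the position of $c_i$ in the total order $T_l$ (with position $1$ being the top, most preferred slot, consistent with $s_1 \geq \cdots \geq s_m$). Hence for any two candidates $c_i,c_j$,
\[
\sum_{p=1}^{m} p\,(a^{l,j}_p - a^{l,i}_p) \;=\; \pi(T_l,c_j) - \pi(T_l,c_i),
\]
so the inequality $\sum_{p=1}^{m} p\,(a^{l,j}_p - a^{l,i}_p) > 0$ holds exactly when $\pi(T_l,c_j) > \pi(T_l,c_i)$, i.e.\ exactly when $c_i$ occupies a strictly higher position than $c_j$ in $T_l$; since $T_l$ is total, this is the same as $c_i \succ_{T_l} c_j$.

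Next I would unwind the meaning of a completion. By definition, $P_l \hookrightarrow T_l$ says that $T_l$ is a total order with $\succ_{P_l} \subseteq \succ_{T_l}$, that is, $c_i \succ_{P_l} c_j$ implies $c_i \succ_{T_l} c_j$ for every comparable pair of $P_l$. Combining this with the previous paragraph, the family of conditions ``$\sum_{p=1}^{m} p\,(a^{l,j}_p - a^{l,i}_p) > 0$ for all $c_i \succ_{P_l} c_j$'' is literally the family of conditions ``$c_i \succ_{T_l} c_j$ for all $c_i \succ_{P_l} c_j$,'' which is precisely $P_l \hookrightarrow T_l$. This yields both directions simultaneously: if $P_l \hookrightarrow T_l$ then each such inequality holds, and conversely if each inequality holds then $\succ_{P_l} \subseteq \succ_{T_l}$, and since $T_l$ is assumed to be a total order this makes it a completion of $P_l$.

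There is essentially no hard step; the only point requiring care is the orientation convention, namely that a smaller position value means a more preferred candidate, so that ``$c_i > c_j$ in $P_l$'' (i.e.\ $c_i$ preferred) corresponds to $\pi(T_l,c_i) < \pi(T_l,c_j)$ and hence to the $j$-term-minus-$i$-term inequality as stated. I would also remark that no appeal to transitivity of $P_l$ is needed, since $T_l$ already supplies transitivity; this is exactly why constraint~(\ref{eqn::genPWtoILP_ppCstr}) in Section~\ref{pw-ilp} is imposed only for the pairs actually comparable in $P_l$.
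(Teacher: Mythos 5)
Your proposal is correct and follows essentially the same route as the paper's own proof: both translate $\sum_{p=1}^{m} p\,(a^{l,j}_p - a^{l,i}_p) > 0$ into $\pi(T_l,c_j) > \pi(T_l,c_i)$ via the identity $\sum_{p} p\cdot a^{l,i}_p = \pi(T_l,c_i)$, and then observe that requiring this for every pair $c_i \succ_{P_l} c_j$ is exactly the definition of $T_l$ extending $P_l$. Your version is a cleaner, simultaneous treatment of the two directions (the paper spells them out separately), but there is no substantive difference.
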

\begin{proof}
Fix a $P_l.$ Suppose, $c_i \succ c_j$ in $P_l.$ Further, suppose, that in a total order $T_l,$ the following holds $\larger{ \sum_{p=1}^{m}} p(a^{l,j}_p - a^{l,i}_p) \ \larger{ >} \ 0.$ This implies that   $\sum_{p=1}^{m} p \cdot a^{l,j}_p  - \sum_{p=1}^{m} p \cdot a^{l,i}_p \ \larger{ >} \ 0$, \ie $\pi(T_l, c_j) - \pi(T_l, c_i) > 0$, or $\pi(T_l, c_j) > \pi(T_l, c_i)$. This means that $c_i > c_j$ in $T_l.$ But this is also the case in $P_l.$ Therefore, $P_l \hookrightarrow T_l.$
\\
Let $P_l \hookrightarrow T_l.$ By definition, for all $c_i \succ c_j$ in $P_l,$ we have $c_i > c_j$ in $T_l$. Therefore, $\pi(T_l, c_j) > \pi(T_l, c_i)$, i.e., $\pi(T_l, c_j) - \pi(T_l, c_i) > 0$. This is nothing but $\sum_{p=1}^{m} p ( a^{l,j}_p  - a^{l,i}_p) \ \larger{ >} \ 0$.
\end{proof}

Consider a profile $\T = (T_1, \hdots, T_n)$ and the scoring vector $\boldsymbol{s} = (s_1, \hdots, s_m).$ The total score of a $c_i \in C$, by definition, is $\sum_{l=1}^{n} s(T_l, c_i)$ which is
$$
\label{total_score_profile}
\larger{ \sum_{l=1}^{n}} \sum_{p=1}^{m} s_{p} \cdot a^{l,i}_p  
$$
The above along with the definition of a possible winner makes the following fact quite obvious.
\begin{lemma}
\label{gen_PW_constraints}
Let $(T_1, \hdots, T_n )$ be a profile as above and let $c_w$ be a fixed candidate. Then the following statements are equivalent.
\begin{enumerate}
    \item $c_w$ is a winner in $\T$ using the scoring rule $\boldsymbol{s} = (s_1, \hdots, s_m)$.
    \item For every candidate $c_i \neq c_w$ we have that,
    $$
         \sum_{l=1}^{n} \sum_{p=1}^{m} s_{p} \cdot a^{l,i}_p \leq  \sum_{l=1}^{n} \sum_{p=1}^{m} s_{p} \cdot a^{l,w}_p
    $$
\end{enumerate}
\end{lemma}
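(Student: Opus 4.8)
The plan is to observe that statement~(2) is merely the definition of ``winner'' rewritten through the score identity recorded immediately before the lemma. First I would recall that, under a positional scoring rule with vector $\boldsymbol{s}=(s_1,\dots,s_m)$, a candidate $c_w$ is a winner in the profile $\T=(T_1,\dots,T_n)$ exactly when its total score $s(\T,c_w)=\sum_{l=1}^{n}s(T_l,c_w)$ is at least $s(\T,c_i)=\sum_{l=1}^{n}s(T_l,c_i)$ for every candidate $c_i\in C$.

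Then I would invoke the identity stated just above the lemma: when the values $a^{l,i}_p$ are obtained from the $T_l$ as in Equation~(\ref{const_assign_a_gen}) --- which by Lemma~\ref{gen_fact_1} form a legitimate $0$-$1$ encoding, with exactly one $1$ per candidate and exactly one $1$ per position --- one has $s(T_l,c_i)=\sum_{p=1}^{m}s_p\cdot a^{l,i}_p$, and hence, summing over $l$, $s(\T,c_i)=\sum_{l=1}^{n}\sum_{p=1}^{m}s_p\cdot a^{l,i}_p$. Substituting this expression for both $c_i$ and $c_w$ in the inequality $s(\T,c_i)\le s(\T,c_w)$ converts the winner condition into precisely the family of inequalities in~(2); the case $c_i=c_w$ is a trivial equality and can be dropped, which is why the index there ranges only over $c_i\ne c_w$. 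Since every step is an equivalence, both implications follow simultaneously.

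Finally I would note the unique-winner variant: replacing each $\le$ in~(2) by $<$ characterizes $c_w$ being the \emph{unique} winner, by the identical argument with strict inequalities, in agreement with the modification of constraint~(\ref{eqn::genPWtoILP_pwCstr}) mentioned earlier. I do not expect any real obstacle here: the only point that needs care is that the score identity $s(T_l,c_i)=\sum_{p}s_p\, a^{l,i}_p$ genuinely relies on the $a$'s encoding a permutation of the positions, which is exactly what the definition in Equation~(\ref{const_assign_a_gen}) (together with Lemma~\ref{gen_fact_1}) guarantees; granting that, the lemma is a one-line unwinding of definitions, isolated here only so that it can be cited cleanly in the correctness proof of the ILP instance $\Sigma$.
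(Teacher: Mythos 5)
Your proposal is correct and matches the paper's treatment: the paper derives the score identity $s(\T,c_i)=\sum_{l=1}^{n}\sum_{p=1}^{m}s_p\cdot a^{l,i}_p$ immediately before the lemma and then declares the equivalence ``quite obvious,'' which is exactly your one-line unwinding of the winner definition. No gap to report.
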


Now we will prove the main theorem.     
      
\begin{theorem}
\label{thm:gen_PWinner}
The following statements are equivalent.
\begin{enumerate}
    \item Candidate $c_w$ is a possible winner w.r.t.\ the rule $r$ and the partial profile ${\P}$.
    \label{gen_PWinnerThm_1}
    \item The system $\Sigma$ has a 0-1 solution.
    \label{gen_PWinnerThm_2}
\end{enumerate}
\end{theorem}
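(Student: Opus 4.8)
The plan is to establish the two implications separately, in each case moving between a completion $\T=(T_1,\dots,T_n)$ of $\P$ and a $0$-$1$ assignment to the variables $x^{l,i}_p$ of $\Sigma$ through the indicator values $a^{l,i}_p$ of Equation~(\ref{const_assign_a_gen}), and then invoking the four lemmas already proved. The structural facts about the reduction have all been isolated into Lemmas~\ref{gen_fact_1}, \ref{fact_val_prop}, \ref{gen_fact_3}, and~\ref{gen_PW_constraints}, so what remains is essentially bookkeeping.

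For $(\ref{gen_PWinnerThm_1})\Rightarrow(\ref{gen_PWinnerThm_2})$: assume $c_w\in\PW(r,\P)$, so there is a completion $\T=(T_1,\dots,T_n)$ of $\P$ in which $c_w$ is a winner. Define $x^{l,i}_p:=a^{l,i}_p$ from this $\T$ as in Equation~(\ref{const_assign_a_gen}); these are $0$-$1$ values by construction. Lemma~\ref{gen_fact_1} gives the validity constraints~(\ref{eqn::genPWtoILP_cndCstr}) and~(\ref{eqn::genPWtoILP_posCstr}). Since each $T_l$ extends $P_l$, applying Lemma~\ref{gen_fact_3} (in the direction from $P_l\hookrightarrow T_l$ to the inequality) to every pair with $c_i\succ_{P_l}c_j$ gives the precedence constraints~(\ref{eqn::genPWtoILP_ppCstr}). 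Finally, since $c_w$ is a winner in $\T$, the implication $(1)\Rightarrow(2)$ of Lemma~\ref{gen_PW_constraints}, combined with the identity $s(T_l,c_i)=\sum_{p} s_p a^{l,i}_p$ recorded before that lemma, yields the winner constraints~(\ref{eqn::genPWtoILP_pwCstr}). Hence this assignment is a $0$-$1$ solution of $\Sigma$.

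For $(\ref{gen_PWinnerThm_2})\Rightarrow(\ref{gen_PWinnerThm_1})$: let $\{x^{l,i}_p\}$ be a $0$-$1$ solution of $\Sigma$. Fix $l$. The values $\{x^{l,i}_p\}_{i,p}$ are $0$-$1$, satisfy $\sum_p x^{l,i}_p=1$ by~(\ref{eqn::genPWtoILP_cndCstr}), and satisfy $\sum_i x^{l,i}_p=1$ by~(\ref{eqn::genPWtoILP_posCstr}), so by Lemma~\ref{fact_val_prop} they induce a total order $T_l$ on $C$ in which the rank of $c_i$ is the unique $p$ with $x^{l,i}_p=1$, equivalently $\pi(T_l,c_i)=\sum_p p\,x^{l,i}_p$. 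Constraint~(\ref{eqn::genPWtoILP_ppCstr}) together with Lemma~\ref{gen_fact_3} (in the direction from the inequality to $P_l\hookrightarrow T_l$) shows that $T_l$ extends $P_l$; since this holds for every $l$, the tuple $\T=(T_1,\dots,T_n)$ is a completion of $\P$. Constraint~(\ref{eqn::genPWtoILP_pwCstr}) is precisely statement~(2) of Lemma~\ref{gen_PW_constraints} for this $\T$, so $c_w$ is a winner in $\T$, and therefore $c_w\in\PW(r,\P)$.

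The only points needing a little care — and they are exactly the content of Lemma~\ref{gen_fact_3} — are the encoding conventions: that a smaller rank index $p$ corresponds to a more preferred position, so that the strict inequality ``$>0$'' in~(\ref{eqn::genPWtoILP_ppCstr}) correctly expresses $c_i\succ_{P_l}c_j$, and that for $0$-$1$ vectors summing to $1$ the expression $\sum_p p\,x^{l,i}_p$ really is the rank. I do not foresee a substantive obstacle; the proof is a direct assembly of the preceding lemmas. The same argument proves the possible-unique-winner variant with essentially no change once the weak inequality in~(\ref{eqn::genPWtoILP_pwCstr}) and in Lemma~\ref{gen_PW_constraints} is replaced by its strict counterpart.
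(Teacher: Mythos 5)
Your proposal is correct and follows essentially the same route as the paper's own proof: both directions are obtained by translating between a completion $\T$ of $\P$ and a $0$-$1$ assignment via the indicator values of Equation~(\ref{const_assign_a_gen}), and then invoking Lemmas~\ref{gen_fact_1}, \ref{fact_val_prop}, \ref{gen_fact_3}, and~\ref{gen_PW_constraints} exactly as the paper does. If anything, your write-up is slightly more explicit than the paper's about which direction of Lemma~\ref{gen_fact_3} is used in each implication, but the argument is the same.
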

\begin{proof}
$(\ref{gen_PWinnerThm_1} \implies \ref{gen_PWinnerThm_2})$ Assume that a partial order
$P = \{P_1, \hdots, P_n \} \hookrightarrow T = \{T_1, \hdots, T_n \}$ such that $c_w$ is a possible winner.
Set 
\begin{align}
    a_{p}^{l,i} = 
    \begin{cases}
    1,& \text{if } \pi(T_l, c_i) = p\\
    0,              & \text{ otherwise }
\end{cases}
\end{align}
We claim that  the assignment $x^{l,i}_p \mapsto a^{l,i}_p$ satisfy all the equations of the system $\Sigma.$
\\
Indeed, from Lemma \ref{gen_fact_1}, we know that this satisfies the following constraints
\begin{enumerate}
    \item $\sum_{p=1}^{m} a_{p}^{l,i} = 1$
    \item For $p = 1,  \hdots, m,$ we have $\sum_{i=1}^{n} a^{l,i}_{p} = 1$
\end{enumerate}
Since $\P \hookrightarrow \T$, by definition, $P_l \hookrightarrow T_l$. By Lemma  \ref{gen_fact_3}, the constraints
$\larger{ \sum_{p=1}^{m}} p(x^{l,j}_p - x^{l,i}_p) \ \larger{ >} \ 0$ are satisfied.
Since $c_w$ is a possible winner in $P$, by Lemma \ref{gen_PW_constraints}, the constraints
$$
         \sum_{l=1}^{n} \sum_{p=1}^{m} s_{ p} \cdot x^{l,i}_p \leq  \sum_{l=1}^{n} \sum_{p=1}^{m} s_{ p} \cdot x^{l,w}_p
    $$
     are satisfied.
\\
$(\ref{gen_PWinnerThm_2} \implies  \ref{gen_PWinnerThm_1})$Assume that the system $\Sigma$ has the integer solution $a^{l,i}_p \ (1 \leq l \leq n, 1 \leq i \leq m, 1 \leq p \leq m )$. The each $a^{l,i}_p$ is either $0$ or $1$ by the first group of constraints. Furthermore, by Lemma \ref{fact_val_prop}, the constraints 
\begin{align}
    \text{For each candidate }c_i \in C & \nonumber\\
                                    & \sum_{p=1}^{m} x^{l,i}_p = 1\\
    \text{For each position }p=1,  \hdots, m & \nonumber \\ 
                                    & \sum_{i=1}^{m} x^{l,i}_p =1
\end{align}
ensure that each vote induces a total order $>_l$ on $C.$ Furthermore, the total order $T_l$ extends $P_l$ because of the constraints $\sum_{p=1}^{m} p ( a^{l,j}_p  - a^{l,i}_p) \ \larger{ >} \ 0.$ Finally, since the constraints 
$$
         \sum_{l=1}^{n} \sum_{p=1}^{m} s_{ p} \cdot x^{l,i}_p \leq  \sum_{l=1}^{n} \sum_{p=1}^{m} s_{ p} \cdot x^{l,w}_p
    $$
   are satisfied, Lemma  \ref{gen_PW_constraints} implies that $c_w$ is a possible winner.
\end{proof}
We illustrate the preceding result in the following concrete cases.

\begin{description}
\item [Borda rule]
The scoring vector for Borda is $(m-1,m-2,\ldots,0)$. Thus,
 the constraints in (\ref{eqn::genPWtoILP_pwCstr}) become

\begin{align}
   \sum_{l=1}^{n} \sum_{p=1}^{m} (m-p) \cdot x^{l,i}_p  \leq  \sum_{l=1}^{n} \sum_{p=1}^{m} (m-p) \cdot x^{l,w}_p. 
   \label{eqn::Borda-constraint}
\end{align}


\eat{Note that for the PuW case, the inequality in the last constrain is replaced by a strict one.}

\item [$t$-approval]
The scoring vector for $t$-approval assigns a score of $1$ to each of the first $t$ ranked candidates, and $0$ to the remaining ones. Thus, the constraints in  (\ref{eqn::genPWtoILP_pwCstr}) become
\begin{align}
   \sum_{l=1}^{n} \sum_{p=1}^{t}  x^{l,i}_p  \leq  \sum_{l=1}^{n} \sum_{p=1}^{t}   x^{l,w}_p.
   \end{align}
\end{description}
\eat{
Instead of introducing the variables $\{ x^{l, i}_p \}_{p=1}^{m}$ as in the general case, here, one needs $\{x^{l,i}_p\}_{p=1}^{k}$. The resulting  system has $O(mnk)=O(mn)$ variables and $O(m^2n)$ constrains.}

\subsection{Checking a Possible Winner}\label{sec:possible:ilp}
Determining whether or not $c_w \in \PW(r,{\P})$ using our methodology involves the following two main steps.
\begin{enumerate}
    \item Construct the ILP model. Constraints (\ref{eqn::genPWtoILP_cndCstr}) and (\ref{eqn::genPWtoILP_posCstr}) depend only on $m$ and $n$, whereas, constraints (\ref{eqn::genPWtoILP_ppCstr}) and (\ref{eqn::genPWtoILP_pwCstr}) depend additionally on $c_w$ and on the partial profile.
    \item Solve the ILP model.
\end{enumerate}
Fix the values for $m$ and $n$. One  creates a \emph{partial} model for the corresponding $(m,n)$ with only constraints (\ref{eqn::genPWtoILP_cndCstr}) and (\ref{eqn::genPWtoILP_posCstr}). This is called pre-processing.  To save time, pre-processed models can be reused when the  candidate $c_w$, the partial profile $\P$, or both change. 
To solve a specific problem, one loads the appropriate pre-processed model, and updates it by adding constraints (\ref{eqn::genPWtoILP_ppCstr}) and (\ref{eqn::genPWtoILP_pwCstr}) before solving it.

\subsection{Three-phase Computation of the Set of Possible Winners}\label{sec:possible:threephase}

A straight-forward way to compute the set of possible winners $\PW(r,\P)$ is to execute the computation described in Section~\ref{sec:possible:ilp} above $m$ times, once for each candidate.   We now describe a more efficient method that uses pruning and early termination techniques, and heuristics to quickly identify clear possible winners.  This method involves three phases:

\begin{enumerate}
    \item Use $\NW(r,\P)$ to identify a subset of possible winners $C^{1}_{pw}$, and to prune clear non-winners $C^{1}_{lsr}$.  
    Pass the remaining $C^{1} = C \setminus (C^{1}_{pw} \cup  C^{1}_{lsr})$ to the next phase.
    \item Use a heuristic to construct a completion in which $c \in C^{1}$ is a winner.  Add all candidates for which such a completion is found to $C^{2}_{pw}$, and pass the remaining $C^{2} = C^{1} \setminus C^{2}_{pw}$ to the next phase.
   \item Invoke the subroutine described in Section~\ref{sec:possible:ilp} to check a possible winner for each $c \in C^{2}$ using an ILP solver.  Add all identified possible winners to $C^{3}_{pw}$.  
\end{enumerate}

The final set of possible winners is $C^{1}_{pw} \cup C^{2}_{pw} \cup C^{3}_{pw}$.


\paragraph{Phase 1: Using the Necessary Winner algorithm.}   Let us denote by $S_{total}(r,m)$ the sum of scores of all candidates in some total voting profile.
We will execute  $\NW(r,\P)$ to compute the set of necessary winners, which are also possible winners.  Recall that as part of the  $\NW(r,\P)$ computation, we compute and record, for all $c \in C$, the best possible score $S_{max}(c) = \textsc{argmax}_{P \in \P}~r(m,\left|\textsc{Up}_{P}(c) \right|)$.
We can immediately identify candidates whose $S_{max}(c)$ is highest as possible winners, and add them to $C^{1}_{pw}$. Further, if $S_{max}(c) > \frac{1}{2} S_{total}(r,m)$, then $c$ is also a possible winner, and is added to $C^{1}_{pw}$.    

On the other hand, if $S_{max}(c) < \frac{1}{m} S_{total}(r,m)$ then $c$ is not a possible winner (by pigeonhole principle), and it can be pruned.  Further, consider the step in $\NW(r,\P)$ where we execute competitions between pairs of candidates $c$ and $w$.  As we compute $S(w)$ and $S(c)$, we may observe that $S(w)- S(c) < 0$.  This allows us to prune $w$ as a non-winner, adding it to $C^{1}_{lsr}$.
 


\paragraph{Phase 2: Constructing a completion.}  Next, given a candidate $c$, we consider ${\P}=(P_1,\ldots,P_n)$ and heuristically attempt to create  a total voting profile  ${\T}=(T_1,\ldots,T_n)$ that completes $\P$
and in  which $c$ is the winner.  If such a $\T$ is found, then $c$ is added to $C^{2}_{pw}$.  
To construct $\T$, we complete each partial vote $P \in {\P}$ independently, as follows:

    (1) For a given $P$, place $c$ at the worst possible rank in which it achieves its best possible score.  The reason for this is to minimize the scores of the items in $\textsc{Up}_{P}(c) \setminus {c}$.
    
    (2) Place the remaining candidates from $P$ into $T$.  If multiple placements are possible, chose one that increases the score of the currently highest-scoring candidates the least.
    
    (3) Keep a list of candidates other than $c$ that are the possible winners so far.  In subsequent completions,  place these candidates as low as possible, minimizing their score.

\paragraph{In summary,} we described a reduction of the problem of checking whether a candidate $c$ is a possible winner to an ILP, and proposed a three-phase computation that limits the number of times the ILP solver is invoked for a set of candidates $C$.  We will show experimentally in Section~\ref{sec:exp:pw} that the proposed techniques can be used to compute the set of possible winners in realistic scenarios. 

\section{The Repeated Selection Model for Poset Generation}\label{sec:rsm}

In this section we introduce a novel generative model for partially ordered sets, called the Repeated Selection Model, or RSM for short.   It  includes earlier generative models of partial orders as special cases via a suitable choice of parameters.  We regard RSM as being a model of independent interest, and we also use it here as part of our experimental evaluation, described in Section~\ref{sec:exp}.  To start, we introduce the Repeated Insertion Model (RIM) that is used for generating total orders in Section~\ref{sec:rsm:rim}.  We then describe our novel RSM model in Section~\ref{sec:rsm:rsm}.

\subsection{Preliminaries: The Repeated Insertion Model (RIM)}
\label{sec:rsm:rim}

In this section we represent total orders using rankings, that is, ordered lists of items indexed by position.  We will use $\bsigma$, $\btau$, and so on to denote rankings. We will use $\bsigma(i)$ to refer to an item at position $i$ in  $\bsigma$, and we will use $\bsigma^{-1}(a)$ to denote the position of element $a$ in $\bsigma$.  When describing iterative algorithms, for convenience of presentation we will denote by $\bsigma_i$ the value of $\bsigma$ at step $i$.

The Repeated Insertion Model (RIM) is a generative  model that defines a probability distribution over rankings due to Doignon \etal~\cite{Doignon2004}.  This distribution, denoted by $\RIM(\bsigma, \Pi)$, is parameterized by a reference ranking $\bsigma$ and a function $\Pi$, where $\Pi(i, j)$ is the probability of inserting $\bsigma(i)$ at position $j$.  Here, $\Pi$ is a matrix  where each row corresponds to a valid probability distribution (i.e., the values in a row sum up to one). Algorithm~\ref{alg:rim} presents the RIM sampling procedure. It starts with an empty ranking $\btau$, inserts items in the order of $\bsigma$, and puts item $\bsigma(i)$ at $j^{th}$ position of the currently incomplete $\btau$ with probability $\Pi(i, j)$.  The algorithm terminates after $m$ iterations, and outputs $\btau$, a total order over the items drawn from $\bsigma$.

\begin{algorithm}[h!]
	\caption{$\RIM(\bsigma, \Pi)$}
	\begin{algorithmic}[1]
		\STATE Initialize an empty ranking $\btau
		= \ranking{}$.
		\FOR {$i = 1, \ldots, m$}
		\STATE Select a random position $j\in [1, i]$ with a probability $\Pi(i,j)$
		\STATE Insert $\bsigma(i)$ into $\btau$ at position $j$
		\ENDFOR
		\RETURN  $\btau$
	\end{algorithmic}
	\label{alg:rim}
\end{algorithm}

\begin{example}
	$\RIM(\ranking{a, b, c}, \Pi)$ generates $\btau {=} \ranking{b, c, a}$ as follows.
	\begin{itemize}
	\item Initialize an empty ranking $\btau_0 {=} \ranking{}$. 
	\item At step 1, $\btau_1 {=} \ranking{a}$ by inserting $a$ into $\btau_0$ with probability $\Pi(1,1) {=} 1$. 
	\item At step 2, $\btau_2 {=} \ranking{b, a}$ by inserting $b$ into $\btau_1$ at position 1 with probability $\Pi(2,1)$. Note that $b$ is put before $a$ since $b \succ_{\btau} a$.
	\item At step 3, $\btau {=} \ranking{b, c, a}$ by inserting $c$ into $\btau_2$ at position 2 with probability $\Pi(3,2)$. 
	\end{itemize}
	The overall probability of sampling $\btau$ is $\Pr(\btau \mid \ranking{a,b,c}, \Pi) {=} \Pi(1,1) \cdot \Pi(2,1) \cdot \Pi(3,2)$.  Note that this particular sequence of steps is the only way to sample $\ranking{b, c, a}$ from $\RIM(\ranking{a, b, c}, \Pi)$.
		\label{rim:example}
\end{example}

The Mallows model~\cite{Mallows1957}, $\mallows(\bsigma, \phi), \phi \in (0, 1]$, is a special case of RIM.
As a popular preference model, it defines a distribution of rankings that is analogous to the Gaussian distribution: the ranking $\bsigma$ is at the center, and rankings closer to $\bsigma$ have higher probabilities.  Specifically, the probability of a ranking $\btau$ is given by:

\begin{equation}
\Pr(\btau|\mallows(\bsigma, \phi)) = \dfrac{\phi^{\dist(\bsigma, \btau)}}{1\cdot(1+\phi)\cdot(1+\phi+\phi^2)\ldots(1+\ldots+\phi^{m-1})}
\label{eq:mallows}
\end{equation}

Here, $\dist(\bsigma, \btau)$ is the Kendall-tau distance between $\bsigma$ and $\btau$: $\dist(\bsigma, \btau) = |{(a, a') \mid a \succ_{\bsigma} a', a' \succ_{\btau} a}|$, that is the number of preference pairs $(a, a')$ that appear in the opposite relative order in $\bsigma$ and $\btau$.  The expression in the denominator of Equation~\ref{eq:mallows} is the normalization constant, which we will find convenient to denote $Z_{\phi,m}$. When $\phi \rightarrow 0$, the probability mass is concentrated around the reference ranking $\bsigma$; when $\phi=1$, all rankings have the same probability, that is, $\mallows(\bsigma, 1)$ is the uniform distribution over rankings.

As was shown in~\cite{Doignon2004}, $\RIM(\bsigma, \Pi)$ is precisely $\mallows(\bsigma, \phi)$ when $\Pi(i, j) = \frac{\phi^{i-j}}{1+\phi+...+\phi^{i-1}}$.  That is, the Mallows model is a special case of RIM, and so RIM can be used as an efficient sampler for Mallows.



 

\subsection{The Repeated Selection Model (RSM)}\label{sec:rsm:rsm}

The Repeated Selection Model (RSM) is a generative model that defines a probability distribution over posets.  
Intuitively, in this model we iteratively select a random item and randomly choose whether it succeeds each of the remaining items. More formally, an instance of this distribution, denoted $\RSM(\bsigma,\Pi, p)$, is parameterized by a reference ranking $\bsigma$ of length $m$, a selection probability function $\Pi$, where $\Pi(i,j)$ is the probability of \emph{selecting} the $j^{th}$ item among the remaining items at step $i$, and a preference probability function $p:\{1,\dots,m-1\}\rightarrow[0,1]$ that determines the probability $p(i)$ that the $i^{th}$ selected item precedes (is preferred to) each of the remaining items. We view $\Pi$ as a matrix  where each row corresponds to a valid probability distribution (i.e., the values in a row sum up to one) and the $i-1$ rightmost entries in the $i^{th}$ row are zero.

Algorithm~\ref{alg:rsm} presents the RSM sampling procedure. Intuitively, in contrast to RIM (Algorithm~\ref{alg:rim}) that considers candidates one by one in the order of $\bsigma$ and inserts them into the output $\btau$, RSM iteratively selects, and removes, candidates one by one from among the remaining candidates in $\bsigma$ at each step.  Which candidate is selected at step $i$ is decided randomly, based on the probability distribution in the $i^{th}$ row of the selection probability matrix $\Pi$ (line~\ref{alg:rsm:pick} of Algorithm~\ref{alg:rsm}).  Furthermore, to generate posets rather than total orders, RSM uses the preference probability function $p$ to decide whether to add a particular preference pair to $\btau$; this decision is made independently for all considered pairs (lines~\ref{alg:rsm:prefstart}-\ref{alg:rsm:prefend} of Algorithm~\ref{alg:rsm}).  The probability that a candidate selected at step $i$ is preferred to each of the remaining candidates in $\bsigma$ is $p(i)$.

\begin{algorithm}[h!]
	\caption{$\RSM(\bsigma, p, \Pi)$}
	\begin{algorithmic}[1]
		\STATE Initialize an empty poset $\btau = \ranking{}$.
		\FOR {$i = 1, \ldots, m-1$}
		\STATE Select a random position $j\in [1, m]$ with a probability $\Pi(i,j)$
		\STATE Select candidate $c=\bsigma(j)$  \label{alg:rsm:pick}
		\STATE Remove $c$ from $\bsigma$, which now contains $m-i$ candidates
		 \FOR {$k = 1, \ldots, m-i$} \label{alg:rsm:prefstart}
		 \STATE Add the pair $c \succ \bsigma(k)$ to $\btau$ with probability $p(i)$ (or leave it out with probability $1-p(i)$)
		 \ENDFOR \label{alg:rsm:prefend}
        \ENDFOR
		\RETURN the transitive closure of $\btau$
	\end{algorithmic}
	\label{alg:rsm}
\end{algorithm}

\begin{example}
	$\RSM(\ranking{a, b, c}, \Pi, p)$ can generate $\btau {=} \ranking{b \succ a, a \succ c, b \succ c}$ as follows:
	\begin{itemize}
	\item Initialize an empty poset $\btau{=}\ranking{}$. 
	
	\item At step $i=1$, select $b$ with probability $\Pi(1,2)$ and remove it from $\bsigma$, setting $\bsigma_1{=}\ranking{a, c}$.  
	Then, add the pair $b \succ a$ to $\btau$ with probability $p(1)$, and \emph{do not add} the pair $b \succ c$ to $\btau$ with probability $1-p(1)$.  
	
	\item At step $i=2$, select $a$ with probability $\Pi(2,1)$ and remove it from $\bsigma$, setting $\bsigma_2{=}\ranking{c}$.  
	Finally, add the pair $a \succ c$ to $\btau$ with probability $p(2)$.
	
	\item Take the transitive closure of $\btau$ and return $\ranking{b \succ a, a \succ c, b \succ c}$.
		\end{itemize}
		
	The probability of sampling $\btau$ in this way is
	$\Pi(1,2) \cdot p(1) \cdot (1- p(1)) \cdot \Pi(2,1) \cdot p(2)$.   
	\label{rsm:example}
\end{example}
		
Note that the same $\btau$ can be generated by $\RSM(\ranking{a, b, c}, \Pi, p)$ using a different sequence of steps, thus yielding a different probability.  In our example there is one other way to derive $\btau$: at step $i=1$,  add $b \succ c$ to $\btau$ with probability $p(1)$. This happens with the probability $\Pi(1,2) \cdot p(1) \cdot p(1) \cdot \Pi(2,1) \cdot p(2)$.  
These are the only two possible derivations of $\btau$ in our example.  These  yield the total probability
\[\Pr(b \succ a \succ c \mid \ranking{a,b,c}, \Pi, p) = \Pi(1,2) \cdot p(1) \cdot \Pi(2,1) \cdot p(2)\,.\]
In the general case, however, it is not clear whether this  probability can be computed efficiently. In particular, 
the probability of a poset may be due to all the linear extensions of the poset.

Importantly, RSM includes several generative models of partial orders as special cases via a suitable choice of parameters. For example,  $p= (1,...,1,0,...,0)$ with $k$ ones  will generate a top-truncated partial order, whereas $p=(0,...,0,1,...,1)$ with $k$ ones will generate a partial chain over a subset of $k$ items.  Moreover, a uniform $p$ gives rise to the generative model referred to as \emph{Method 1} of Gehrlein~\cite{gehrlein1986methods}. Figure~\ref{fig:RSM_eval_gehrlein} compares these probability distributions empirically.  Finally, as we show below, the Mallows model is also a special case of RSM.


\begin{theorem} \label{thm:RSM_Mallows}
For a given $\phi \in (0,1]$, and for $p(i)=1$ for all $i$, we have that $\RSM(\bsigma,\Pi, p)$ is precisely $\mallows(\bsigma, \phi)$ when  $\Pi(i,j) = \frac{\phi^{j-1}}{\sum^{m-i+1}_{k=1}\phi^{k-1}}$. 
\end{theorem}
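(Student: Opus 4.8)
The plan is to compute directly, under the stated choice of parameters, the probability that $\RSM(\bsigma,\Pi,p)$ outputs a prescribed total order $\btau$, and to check that it coincides with the Mallows probability of Equation~(\ref{eq:mallows}). First I would observe that when $p(i)=1$ for every $i$, the inner loop (lines~\ref{alg:rsm:prefstart}--\ref{alg:rsm:prefend}) adds \emph{all} pairs $c\succ\bsigma(k)$, so the candidate selected at step $i$ precedes every candidate selected at a later step; hence the transitive closure returned by Algorithm~\ref{alg:rsm} is always a total order, namely the one whose $i$-th element is the candidate selected at step $i$, with the single leftover candidate occupying position $m$. Consequently, for a fixed target ranking $\btau$ there is a \emph{unique} execution producing it: at step $i$ one is forced to select $\btau(i)$, and after steps $1,\dots,i-1$ the remaining candidates are exactly $\{\btau(i),\dots,\btau(m)\}$, listed in $\bsigma$-order. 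So no summation over derivations is needed, and $\Pr(\btau\mid\bsigma,\Pi,p)=\prod_{i=1}^{m-1}\Pi(i,j_i)$, where $j_i$ is the position of $\btau(i)$ among $\{\btau(i),\dots,\btau(m)\}$ in $\bsigma$-order.

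Next I would identify $j_i$ combinatorially. Writing $V_i=|\{k : i<k\le m,\ \bsigma^{-1}(\btau(k))<\bsigma^{-1}(\btau(i))\}|$ for the number of candidates that follow $\btau(i)$ in $\btau$ but precede it in $\bsigma$, we have $j_i=V_i+1$. Substituting the stated formula $\Pi(i,j)=\phi^{j-1}/\sum_{k=1}^{m-i+1}\phi^{k-1}$ gives
\[
\Pr(\btau\mid\bsigma,\Pi,p)=\prod_{i=1}^{m-1}\frac{\phi^{V_i}}{\sum_{k=1}^{m-i+1}\phi^{k-1}}=\frac{\phi^{\sum_{i=1}^{m-1}V_i}}{\prod_{i=1}^{m-1}\sum_{k=1}^{m-i+1}\phi^{k-1}}\,.
\]
The key observation is then that $\sum_{i=1}^{m-1}V_i$ counts exactly the pairs $(a,a')$ with $a\succ_{\bsigma}a'$ and $a'\succ_{\btau}a$, i.e. $\sum_i V_i=\dist(\bsigma,\btau)$; and, re-indexing the denominator by $t=m-i+1$, it equals $\prod_{t=2}^{m}\sum_{k=1}^{t}\phi^{k-1}=(1+\phi)(1+\phi+\phi^2)\cdots(1+\cdots+\phi^{m-1})=Z_{\phi,m}$ (the missing $t=1$ factor being $1$). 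Plugging in these two identities yields $\Pr(\btau\mid\bsigma,\Pi,p)=\phi^{\dist(\bsigma,\btau)}/Z_{\phi,m}$, which is precisely $\Pr(\btau\mid\mallows(\bsigma,\phi))$. Since both distributions are supported on the total orders of the $m$ items and agree pointwise, they are equal.

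I expect the only real delicacy — more bookkeeping than genuine obstacle — to be the second step: carefully justifying that the not-yet-selected candidates at step $i$, read in $\bsigma$-order, are $\bsigma$ restricted to $\{\btau(i),\dots,\btau(m)\}$ so that $j_i=V_i+1$, and then matching $\sum_i V_i$ with the Kendall-tau distance (each inversion of $\btau$ relative to $\bsigma$ is counted exactly once, at the step that selects its $\btau$-earlier element). One should also verify the routine edge conditions (that each row of the given $\Pi$ sums to one and has its $i-1$ rightmost entries equal to zero, and that the $m$-th candidate is placed deterministically), and it is worth sanity-checking the argument on Example~\ref{rsm:example} specialized to $p\equiv 1$.
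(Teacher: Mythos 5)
Your proof is correct and rests on the same two observations as the paper's: with $p\equiv 1$ every run of RSM produces a total order via a unique derivation, and the selection made at step $i$ contributes $\phi$ raised to the number of inversions charged to that step, normalized by a geometric sum. The difference is only presentational --- the paper packages the computation as an induction on $m$ via the recurrence $\dist(\btau,\bsigma)=\dist(\btau_{-\tau_1},\bsigma_{-\tau_1})+\bsigma^{-1}(\tau_1)-1$ and the factorization $Z_{\phi,m}=Z_{\phi,m-1}\cdot\sum_{k=1}^{m}\phi^{k-1}$, whereas you unroll the same factorization into a closed-form product, using $\sum_{i}V_i=\dist(\bsigma,\btau)$ and the telescoping of the normalizers into $Z_{\phi,m}$.
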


\begin{proof}
First, observe that because $p(i)=1$ for all $i$, $\RSM(\bsigma,\Pi, p)$ will generate total orders. Further, observe that, although an arbitrary poset can be generated by RSM in multiple ways (as demonstrated by Example~\ref{rsm:example}), there is only one way to obtain a total order (ranking). 
We will show by induction on the number of candidates $m$ in $\bsigma$ that $\Pr(\btau \mid \RSM(\bsigma,\Pi,p)) = \Pr(\btau \mid MAL(\bsigma,\phi))$.  Recall from Equation~\ref{eq:mallows} that $\Pr(\btau \mid MAL(\bsigma,\phi)) = \phi^{\dist(\bsigma,\btau)}/Z_{\phi,m}$, where $Z_{\phi,m}$ is a normalization constant, and $\dist(\bsigma, \btau)$ is the Kendall-tau distance between $\bsigma$ and $\btau$: 
$\dist(\bsigma, \btau) = |{(a, a') \mid a \succ_{\bsigma} a', a' \succ_{\btau} a}|$, that is the number of preference pairs $(a, a')$ that appear in the opposite relative order in $\bsigma$ and $\btau$.  For notational convenience, we will denote by $\bsigma_{-a}$ a subranking of $\bsigma$ with item $a$ removed.  Further, we will denote by $\Pi_{-i,-j}$ a projection of the matrix $\Pi$ with the $i^{th}$ column and $j^{th}$ row removed.

 \vskip1em
\noindent\underline{Base case.} When $m = 1$, both RSM and Mallows generate a single ranking with probability 1.

    
    
    \vskip1em
 \noindent\underline{Inductive step.}  
    Suppose that RSM and Mallows assign the same probability to the subranking of some $\btau$ with the first element $\btau(1)$, denoted $\tau_1$, removed, and with $\bsigma$ and $\Pi$ adjusted accordingly:
    
    \begin{equation}
    \Pr(\btau_{-\tau_1} \mid \RSM(\bsigma_{-\tau_1},\Pi_{-1,-m},p)) = \Pr(\btau_{-\tau_1} \mid \mallows(\bsigma_{-\tau_1},\phi))  
    = \frac{\phi^{\dist(\bsigma_{-\tau_1}, {\btau_{-\tau_1} )}}} {Z_{\phi,m-1}}
    \label{eq:subr}
    \end{equation}
    
    

   
Let us now consider the ranking $\btau$ of length $m$, and observe that $dist(\btau,\bsigma)=dist(\btau_{-\tau_1},\bsigma_{-\tau_1}) + \bsigma^{-1}(\tau_1) -1$, where $\bsigma^{-1}(\tau_1)$ is the position of element $\tau_1$ in $\bsigma$.
%
%
Moreover, the probability to select $\tau_1$ at the first step of RSM is given by: 
\begin{equation}
  \Pi(1,\sigma^{-1}(\tau_1)) = \frac{\phi^{\sigma^{-1}(\tau_1)-1}}{\sum^{m}_{k=1}\phi^{k-1}} 
  \label{eq:pi}
\end{equation}


Combining Equations~\ref{eq:subr} and~\ref{eq:pi}, and recalling the expression for $Z_{\phi,m}$ from Equation~\ref{eq:mallows}, we obtain the following probability for $\btau$ :
    \begin{align*}
        \Pr(\btau  \mid \RSM(\bsigma,\Pi,p)) &= \Pi(1,\sigma^{-1}(\tau_1)) \times \Pr(\btau_{-\tau_1} \mid \RSM(\bsigma_{-\tau_1},\Pi_{-1,-m},p)) \\
        &= \frac{\phi^{\sigma^{-1}(\tau_1)-1}}{\sum^{m}_{k=1}\phi^{k-1}} \times \frac{\phi^{\dist(\bsigma_{-\tau_1}, {\btau_{-\tau_1} )}}} {Z_{\phi,m-1}} 
        = \dfrac{\phi^{\dist(\btau,\bsigma)}}{Z_{\phi,m}}
        = \Pr(\btau \mid \mallows(\bsigma,\phi)). 
    \end{align*}


The proof by induction concludes and the theorem is proven.
\end{proof}



\eat{
\subsection{The Repeated Insertion Model and Mallows}
\par 
There exists several methods to generate full rankings. One of the most popular is the Mallows $\Phi$-model, with a reference ordering $\sigma$ and a parameter $\phi$~\cite{Mallows1957}. The probability to obtain a ranking $\phi$ is given by:
\begin{align}
    P(\rho_0|\sigma,\phi) = \frac{\phi^{d_{\textsc{kt}}(\sigma,\rho_0)}}{\sum_{\rho} \phi^{d_{\textsc{kt}}(\sigma,\rho)}}
\end{align}
where $d_{\textsc{kt}}$ is the \emph{Kendall-Tau distance} defined as the number of inversion between the two rankings :
\begin{align*}
    d_{\textsc{kt}}(\sigma,\rho) =
    |\{(i,j) | i < j ~\wedge ~& ((i >_{\sigma} j \wedge i <_{\rho} j) \\&\vee (i <_{\sigma} j \wedge i >_{\rho} j)) \}|
\end{align*}

When $\phi = 1$, we obtain a uniform distribution over the set of all possible rankings. When $\phi \rightarrow 0$ , the probability mass is concentrated around $\sigma$.
 
Almost 50 years later the \emph{Repeated Insertion Model (RIM)} was
introduced.  RIM generalizes Mallows and defines an efficient sampler
for complete rankings.  Our novel Repeated Selection Model (RSM)
generalized RIM to partially ordered sets, or posets.  In what
follows, we describe RSM, explain its relationship to the models
proposed by Gehrlein~\cite{gehrlein1986methods}, and show that it can
be fit to real datasets. 

\paragraph{The Repeated Selection Model (RSM)}  is a generative model. While in RIM we \emph{insert} every candidate one by one in the output ranking, in RSM we \emph{select} every candidate one by one to build our output ranking.  For instance, RSM  selects which candidate will be the first one in the output ranking. Then, the algorithm selects which candidate will be ranked in the second position. The process continues until every candidate have been selected. At the last step, the last remaining candidate takes the last position.

Formally, we have two inputs: a reference ranking $\sigma$ and a probability matrix $\Pi$. When we select the $i^th$ candidate at the \emph{step $i$}, we rank every remaining candidates in the order they are on $\sigma$ and we obtain a ranking $\sigma_i$ with $m-(i-1)$ candidates. Then, we select the $j^th$ candidate of $\sigma_i$ with probability $\Pi_{i,j}$.

With the above process, we obtain a full ranking. With $ \forall i, \forall j \le m - (i-1), \Pi_{i,j} = \frac{\phi^{j-1}}{\sum^{i}_{k=1}\phi^{k-1}}$, we obtain a Mallows model with parameters $(\sigma,\phi)$.

To enables to obtain partial orders, we add a third parameter: the \emph{probability vector} $p$. We initialize the partial order as an empty list of preference pairs $P$. At the \emph{step $i$}, when we select the $i^{th}$ candidate $c^i$, we know that every candidate which is not yet selected will be less well ranked than $c^i$. Consequently, for every remaining candidate $c'$, we independently add the preference pair $(c^i,c')$ to $P$ with a probability $p_i$.

With particular settings for $p$, we obtain known "partial order" models. For instance, with a uniform $p$, we simulate the random graph model, which is the Method 1 from Gehrlein \cite{}. With $p= (1,...,1,0,...,0)$, we will generate \emph{top-k} partial orders and with $p=(0,...,0,1,...,1)$ a \emph{partial chain} partial order.
 }

\section{Experimental Evaluation}
\label{sec:exp}

All experiments were carried out on an Intel(R) Xeon(R) CPU E5-2680 v3 @ 2.50GHz, with 412 GB of RAM, 20 hyper-threaded cores running 2 threads per core, running Ubuntu 16.04.6 LTS. We used Python 3.5 for our implementation, and the solver Gurobi v8.1.1 ~\cite{gurobi_2019} for solving the ILP instances produced by the reduction from instances of the possible winner problem.  

\subsection{Experimental Datasets and Scoring Rules}
\label{sec:exp:setup}  

\paragraph{Real datasets} We used two real datasets in our experimental evaluation, \emph{travel} and \emph{dessert}. 

The Google Travel Review Ratings dataset (\emph{travel})~\cite{renjith2018evaluation} consists of average ratings (each between 1 and 5) issued by 5,456 users for up to 24 travel categories in Europe.  For each user, we create a set of preference pairs such that items in each pair have different ratings (no tied pairs). Items for which a user does not provide a rating are not included into that user's preferences.  Because preferences are derived from ratings issued by individual users, there cannot be any cycles in the set of preference pairs corresponding to a given user.

The \emph{dessert} dataset was collected by us.  It consists of user preferences over pairs of eight desserts, collected from  228 users, with up to 28 pairwise judgments per user. For each pair, users indicated their confidence in the preference using a sliding bar.  This enabled us to create several voting profiles based on this data, each corresponding to a particular confidence threshold.  With a high confidence threshold, we keep fewer pairs and obtain a sparse profile, and with a low confidence threshold, we keep more pairs and obtain a very dense profile.   Because preferences are collected pairwise, there can be cycles.  We check the set of preferences of each user and only keep those that are acyclic for the experiments in this paper.

\paragraph{Synthetic datasets} 
We use three different types of synthetic voting profiles, namely, \emph{partial chains}, \emph{partitioned preferences}, and \rsmm.  We now describe the data generation process for each.

Recall from Definition~\ref{restr-order-defn} in Section~\ref{sec:preliminaries} that a \emph{partial chain} on a set $C$ is a partial order on $C$ that consists of a linear order on a non-empty subset $C'$ of $C$.  Further, recall that a \emph{partitioned preference} on a set $C$ is a partial order on $C$ with the property that $C$ is partitioned into disjoint subsets $A_1, \ldots, A_q$ such that (a) every element from $A_i$ is preferred to every element from $A_j$, for $i < j \leq q$; and (b) the elements in each $A_i$ are pairwise incomparable. 

We are given the set of candidates $C = \{c_1, c_2, c_3, \dots, c_m\}$, and the number of voters $n$.  To generate a partial chains profile or a partitioned preferences profile, we start with a mixture of three Mallows models, each with $\phi=0.5$,  with a randomly chosen $\sigma$ of size $m$, and covering approximately $\frac{1}{3}$ of the voters, and generate a complete voting profile  $\T=(T_1,\ldots,T_n)$ of total orders on $C$.  Then, to generate a partial chains profile, for each total order $T_i$, choose $d \in [0,m-2]$ uniformly at random, and drop $d$ candidates from $T_i$ by selecting one uniformly at random from the remaining candidates over $d$ iterations.  To generate a partitioned preferences profile, for each $T_i$, choose the number $q$  of non-empty partitions  uniformly at random from the set $[2,m]$.  To partition $T_i$, select $q-1$ positions between 2 and $m$ uniformly at random without replacement, with each position corresponding to the start of a new partition.  Drop the order relations between candidates in the same partition.

To generate an \rsmm voting profile, we use a mixture of three RSMs (as described in Section~\ref{sec:rsm}), each covering $\frac{1}{3}$ of the voters, with selection probability $\Pi_{i,j}$ corresponding to the Mallows model ($\phi=0.5$, randomly chosen $\sigma$ of size $m$). For each of the three RSMs, we draw the preference probability $p(m)$ uniformly from $[0,1]$ for each $m$.
    


\paragraph{Scoring rules}
We evaluated the performance of our techniques for three positional scoring rules, namely, the \plurality~rule, the $2$-approval rule, and the Borda rule.  We chose these rules for two reasons. First, they are arguably among the most well known and extensively studied positional scoring rules.  Second, the plurality rule and the $2$-approval rule are prototypical examples of \emph{bounded-value} rules, that is, rules in which the scores are of  bounded size (in this case, the bound on the size is $2$), while the Borda rule is a prototypical example of an \emph{unbounded-value} rule, that is, the scores may grow beyond any fixed bound. Note that we also conducted experiments for the 
\veto~rule and found out that performance followed the same trends as those for \plurality. We remind the reader that the results about the complexity of the necessary winners and the possible winners with respect to the plurality rule, the $2$-approval rule, and the Borda rule are summarized in Table \ref{tab:results} in Section \ref{sec:preliminaries}.

\eat{
Performance for these rules is representative of performance for other positional scoring rules, and together they cover the complexity classes for possible winners (PW) and necessary winners (NW), and their restrictions to partitioned preferences and partial chains, as discussed in Section~\ref{sec:preliminaries} and summarized in Table~\ref{tab:results} in that section.  Performance for the \veto scoring rule followed the same trends as that for \plurality, and so we only include results for \plurality here.  \julia{Phokion, we don't do experiments on veto, only on plurality, because performance is similar for both. I noted that here, need to rephrase in Section~\ref{sec:preliminaries} slightly as well, right before the table.} 
}

\subsection{Validation of the Repeated Selection Model (RSM)}\label{sec:exp:rsm}  

In this section we compare the Repeated Selection Model (RSM) with {\em Method 1} and {\em Method 2} from Gehrlein~\cite{gehrlein1986methods}.  Our first comparison is of the  empirical distribution of poset density, defined as $d = \frac{D}{{m\choose 2}} = \frac{2D }{m(m-1)}$, where $m$ is the number of items, and $D$ is the total number of preference pairs in the partial voting profile $\P$. Figure~\ref{fig:RSM_eval_gehrlein} presents this comparison for 10 candidates and 50 voters. We observe that the RSM generates partial orders over a wider range of densities than either of the two methods from Gehrlein.

\begin{figure}[t!]
    \centering
    \includegraphics[scale=0.475]{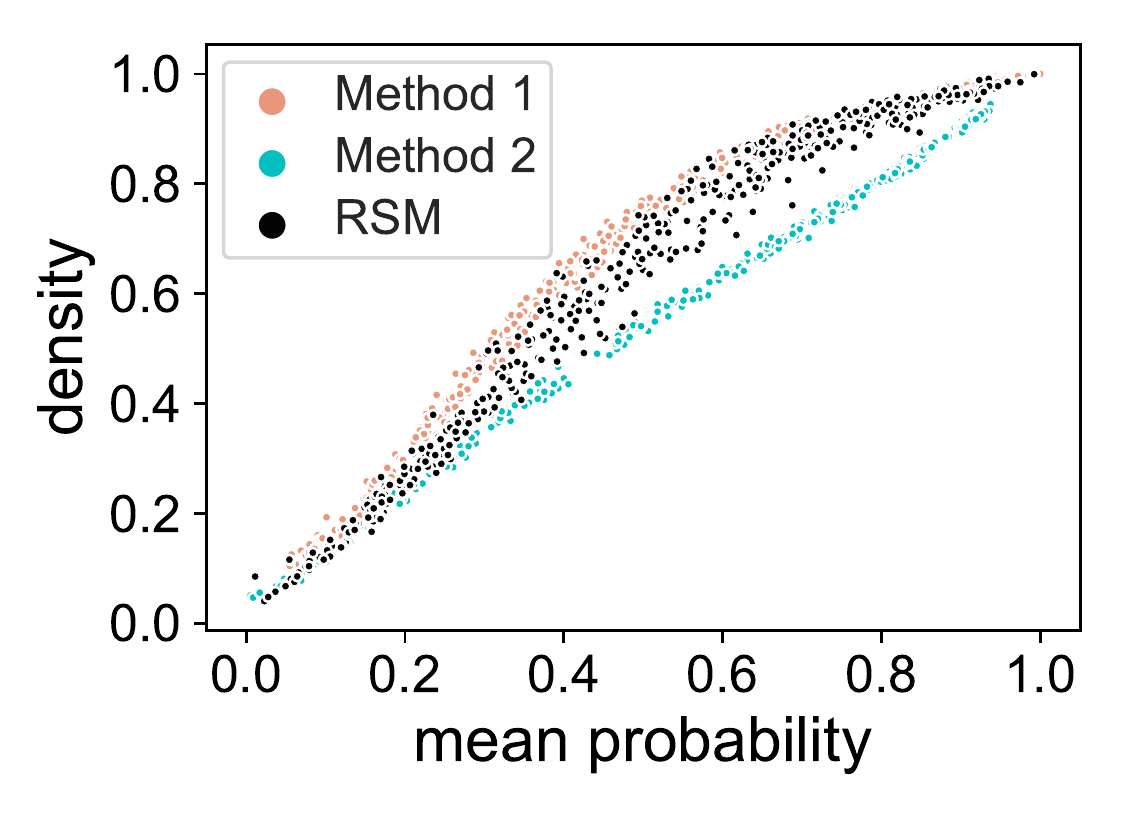}
  \caption{Density distribution of RSM, compared to two poset generation methods from Gehrlein~\cite{gehrlein1986methods}.}
  \label{fig:RSM_eval_gehrlein}
\end{figure}

\begin{table}[b!]
    \centering
    \begin{tabular}{c|c|c|c}
    \toprule
        \multirow{2}{*}{\emph{Dataset}} & \multicolumn{3}{|c}{$\mathcal{NLL}$}  \\
        \cline{2-4}
         &  \emph{RSM} & \emph{Method 1} & \emph{Method 2} \\
         \midrule\midrule
         \emph{Travel} & $\mathbf{9.5}$ & $10.4$  &$10.4$ \\  
         \midrule
         \emph{Dessert (Sparse)} & $\mathbf{12.5}$ & $12.9$ & $14.7$\\  
         \midrule
         \emph{Dessert (Dense)}& $\mathbf{18.7}$ & $19.1$ & $19.8$ \\ 
    \bottomrule
    \end{tabular}
    \caption{Comparison of goodness-of-fit of RSM (Section~\ref{sec:rsm}), and Methods 1 and 2 from Gehrlein~\cite{gehrlein1986methods}, on real-world datasets. $\mathcal{NLL}$ stands for negative log-likelihood, with lower values corresponding to better fit.}
    \label{tab:nll}
\end{table}

We also conducted an experiment to verify that RSM is sufficiently flexible to represent real partial voting profiles.  To do this, we extended the methods of Stoyanovich et al.~\cite{DBLP:conf/webdb/StoyanovichIP16} to fit a single RSM to {\em dessert} and {\em travel} datasets, and compared the goodness of fit to that of Method 1 and Method 2 from Gehrlein~\cite{gehrlein1986methods}. For each dataset, we compute the negative log-likelihood using the voters for whom we know both the real subranking $\btau$ and the synthetically generated subranking $\btau'$:
\[
\mathcal{NLL} = -\frac{1}{n}\sum^{n}_{i}{log(\Pr(\btau'_i\mid\btau_i))}
\]

Our results are summarized in Table~\ref{tab:nll}, and confirm that RSM fits these real datasets more closely than other methods, as quantified by negative log-likelihood ($\mathcal{NLL}$), with lower values corresponding to better fit. Note that all methods fit the \emph{travel} dataset better as compared to the \emph{dessert} dataset because the former contains partitioned preferences with missing candidates.


\subsection{Necessary Winners} 
\label{sec:exp:nw}

In this section, we evaluate the performance of an optimized version of the polynomial-time algorithm by Xia and Conitzer \cite{DBLP:journals/jair/XiaC11}, as described in Section \ref{sec:nw} for three positional scoring rules: \plurality, $2$-approval, and Borda. 

We start with experiments that demonstrate the impact of the  number of voters $n$ and the number of candidates $m$ on the running time of the optimized necessary winners algorithm described in Section~\ref{sec:nw}. In Figure~\ref{fig:nw_voters}, we set $m=100$, vary $n$ between 10 and 10,000 on a logarithmic scale, and show the running time for each of the rules \plurality,  $2$-approval, and Borda, and for each family of synthetic datasets as a box-and-whiskers plot.  We observe that the computation is efficient: RSM Mix is the most challenging, and completes in 10 seconds or less for 10,000 voters across all scoring rules. The running time increases linearly with $n$.

\begin{figure}[t!]
    \begin{subfigure}{0.3\textwidth}
    \centering
    \includegraphics[width=\linewidth]{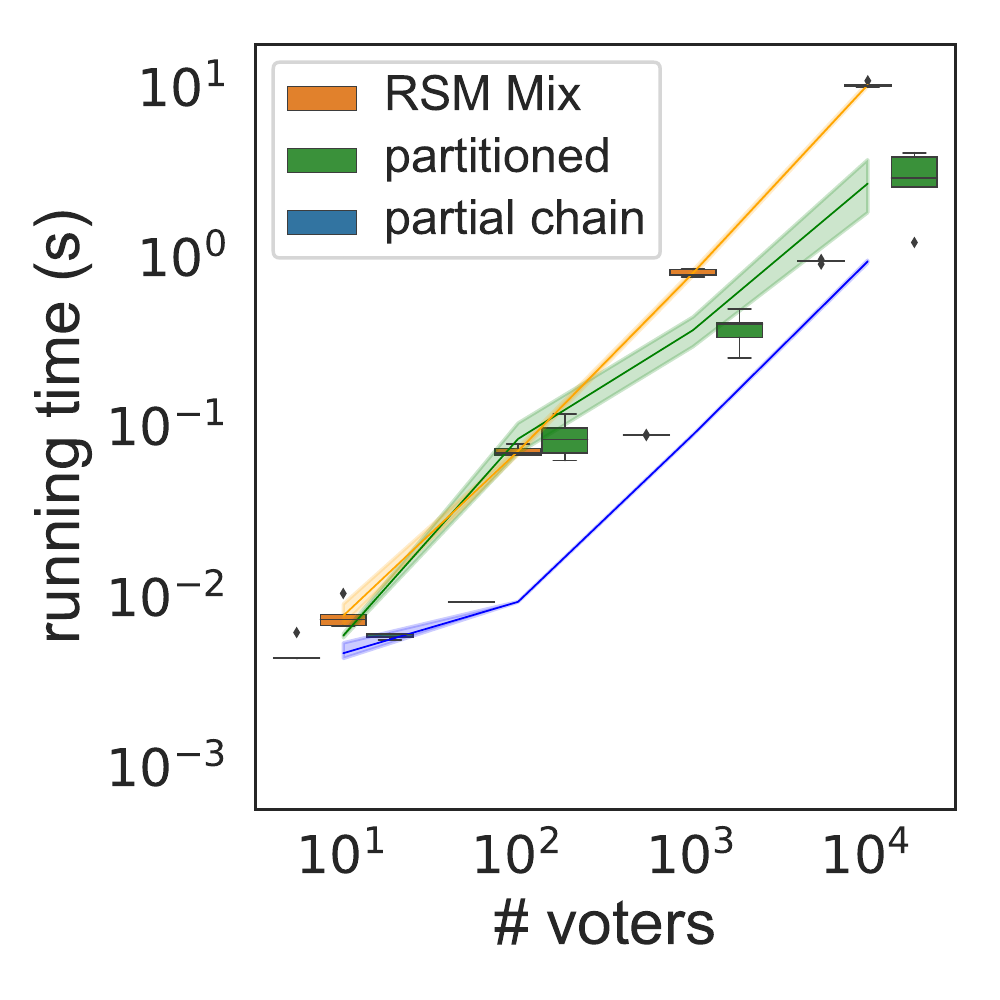}
    \caption{plurality}
    \label{fig:nw_voters_plurality}
    \end{subfigure}
    \begin{subfigure}{0.3\textwidth}
    \centering
    \includegraphics[width=\linewidth]{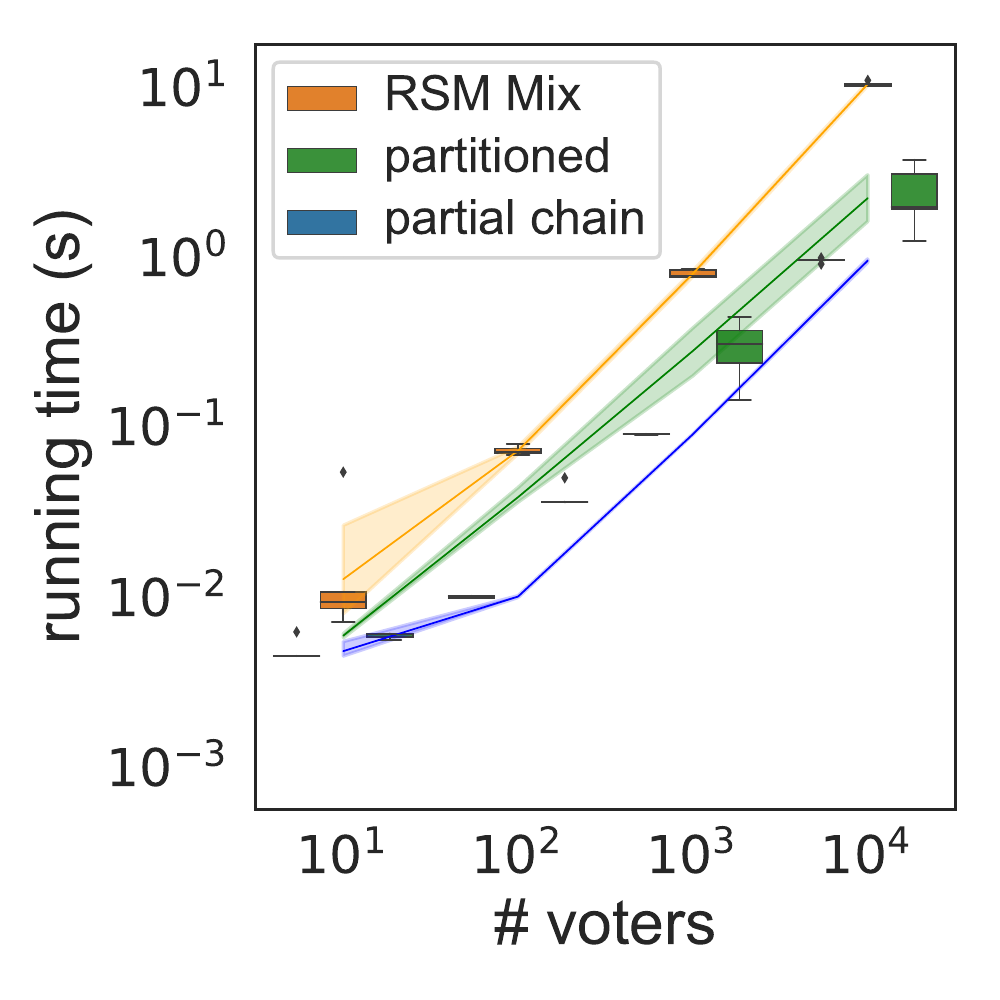}
    \caption{$2$-approval}
    \label{fig:nw_voters_2app}
    \end{subfigure}
    \begin{subfigure}{0.3\textwidth}
    \centering
    \includegraphics[width=\linewidth]{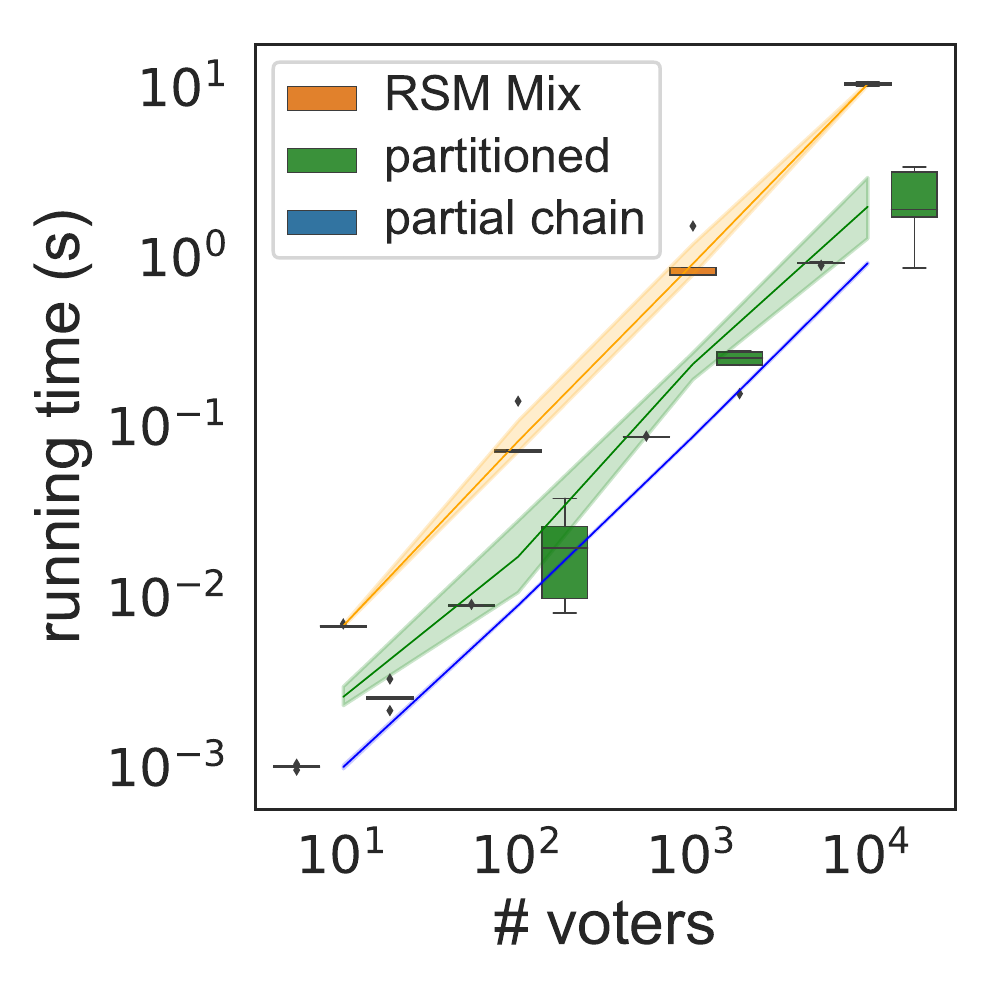}
    \caption{Borda}
    \label{fig:nw_voters_Borda}
    \end{subfigure}
    \caption{Running time of the {\bf necessary winners} computation for 100 candidates and between 10 and 10,000 voters, for three positional scoring rules.}
    \label{fig:nw_voters}
\end{figure}

Next, we analyzed the speed-up achieved by the optimized necessary winners algorithm for $n=10,000$ voters, with $m$ ranging from $10$ to $200$ on a linear scale. Figure~\ref{fig:nw_candidates} shows these results in comparison to a baseline, where we reuse computation of $\textsc{Up}_{P}(c)$ and $\textsc{Down}_{P}(c)$  across candidates, but do not re-order candidates in a competition, and also do not optimize the computation of $\textsc{Up}_{P}(c)$ and  $\textsc{Down}_{P}(c)$  based on the structure of $\P$.  We observe that the optimized implementation outperforms the baseline by a factor of 10-20 in most cases.  Overall, speed-up improves with increasing number of candidates, and  partial chains and partitioned preferences datasets show the highest speed-up.  We see significant variability in Figure~\ref{fig:nw_candidates_plurality}, because some of the instances had  necessary winners and others did not.

We also analyzed the running time and \colorTwo  observed that this computation is efficient: RSM Mix completes in under 40 seconds for $m=200$ for \plurality and  $2$-approval, and for all except one case of Borda, where it takes 60 seconds.  For  partitioned preferences and partial chains, the computation completes in under 8.5 seconds and 2 seconds, respectively, pointing to the effectiveness of the optimizations that use the structure of $P$.  

Finally, the running times were interactive on real datasets: 0.006 seconds for all scoring rules on {\em dessert}, and 0.28 seconds on {\em travel}. We achieved a factor of 2-2.5 speed-up over the baseline version for {\em dessert}, and a factor of 5-6.7 speed-up for {\em travel}.  Speed-up was most significant for Borda, with running time decreasing from 1.87 seconds to 0.28 seconds.

\begin{figure}[t!]
    \begin{subfigure}{0.3\textwidth}
    \centering
    \includegraphics[width=\linewidth]{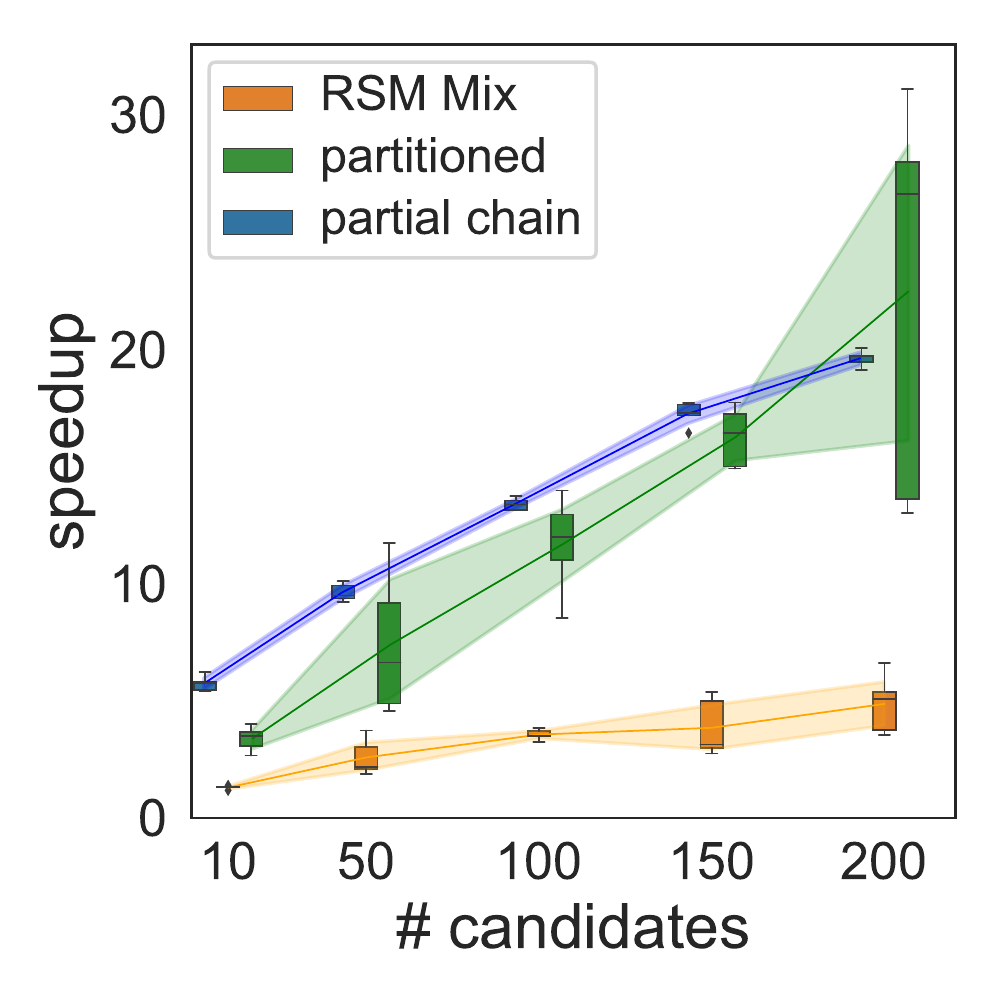}
    \caption{plurality}
    \label{fig:nw_candidates_plurality}
    \end{subfigure}\quad
    \begin{subfigure}{0.3\textwidth}
    \centering
    \includegraphics[width=\linewidth]{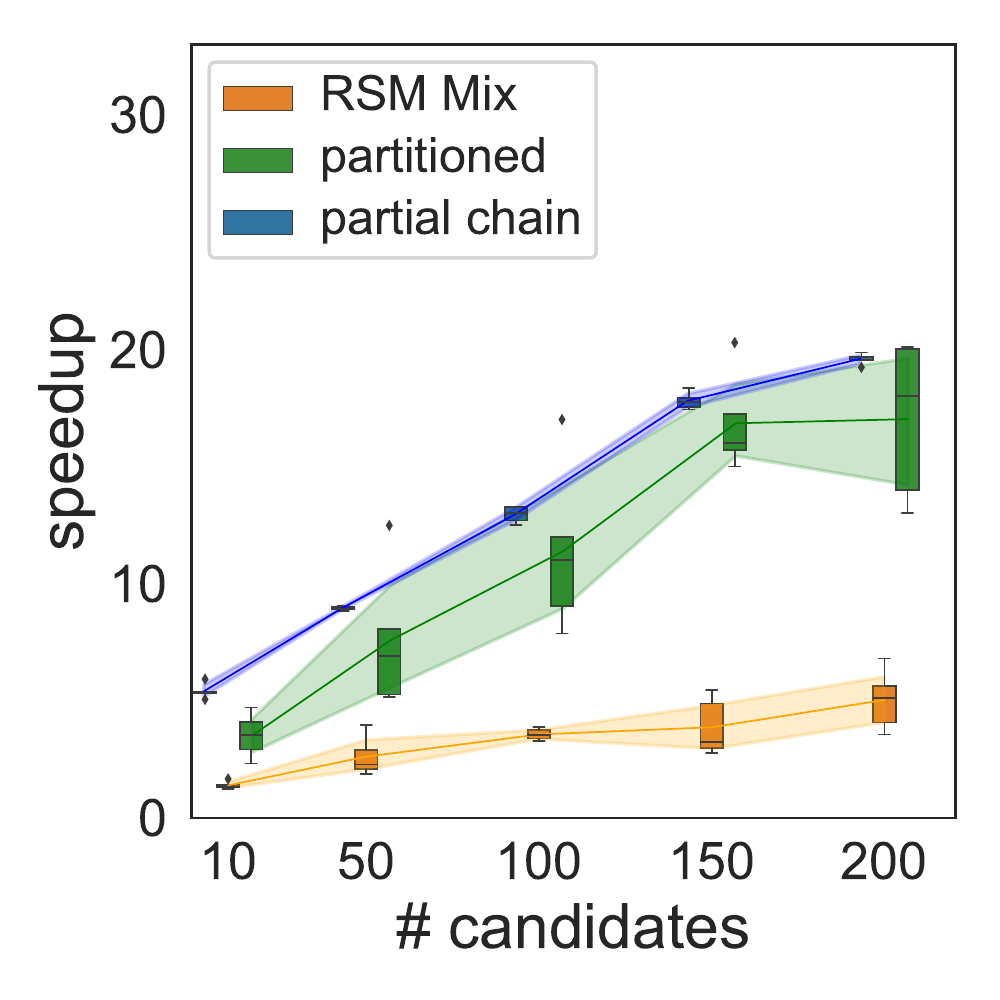}
    \caption{$2$-approval}
    \label{fig:nw_candidates_2app}
    \end{subfigure}\quad
    \begin{subfigure}{0.3\textwidth}
    \centering
    \includegraphics[width=\linewidth]{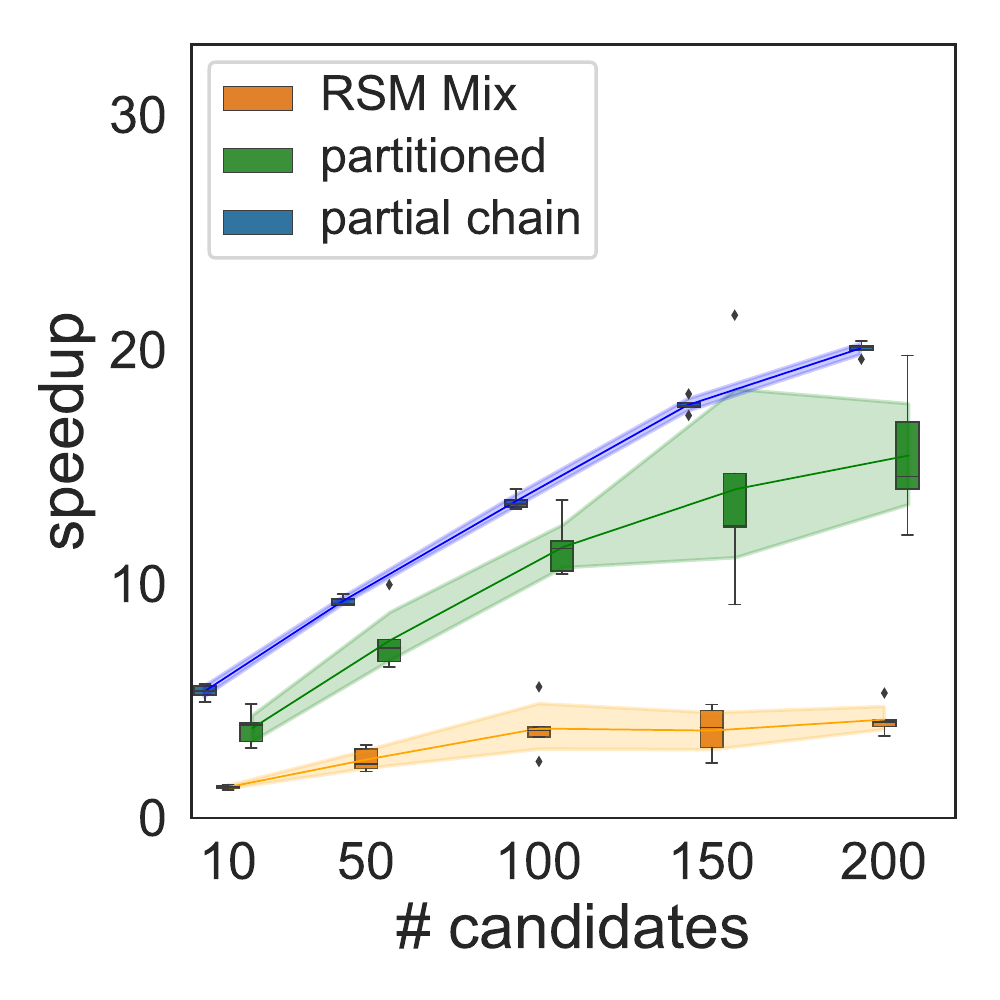}
    \caption{Borda}
    \label{fig:nw_candidates_Borda}
    \end{subfigure}
    \caption{Speed-up factor of the optimized {\bf necessary winners} computation over the baseline, for 10,000 voters, and between 10 and 200 candidates, for three positional scoring rules.  Performance is improved by a factor of 10-20 in most cases, and is highest for partial chains and partitioned preferences.}
    \label{fig:nw_candidates}
\end{figure}

\subsection{Possible Winners}
\label{sec:exp:pw}

In this section, we evaluate the performance of appropriate methods for the computation of possible winners (PW) under \plurality, $2$-approval, and Borda.


\paragraph{Plurality}  To compute PW under \plurality, we implemented an optimized version of the polynomial-time algorithm by Betzler and Dorn~\cite{DBLP:journals/jcss/BetzlerD10}, as described in Section \ref{sec:pw-betzler-dorn}.   Figure~\ref{fig:PW_Plurality} shows the running time of our implementation.  In Figure~\ref{fig:PW_Plurality}(a), we set $m=25$ and vary $n$ between 10 and 10,000 on a logarithmic scale, while in Figure~\ref{fig:PW_Plurality}(b) we set $n=10,000$ and vary $m$ between 5 and 25 on a linear scale.  We observe that this algorithm is efficient: most instances complete in less than 0.5 second, with the exception of a single instance that takes just over 1 second. The running time is higher when there are more possible winners.  For this reason, computation is fastest on partitioned preferences datasets, and slowest on partial chains datasets.  (Note that we set $m$ lower for PW experiments than for NW, where $m$ went up to 200, to have the same experimental setting for \plurality as for 2-approval and Borda, presented later in this section.  A high value of $m$ is infeasible for the latter rules because of the intrinsic complexity of the problem.)

\paragraph{$2$-approval and Borda using three-phase computation}  
As discussed in Section \ref{sec:preliminaries},
computing PW  under the  Borda rule is
NP-complete both for voting profiles consisting of partial chains and for voting profiles consisting of partitioned preferences.  Furthermore, computing PW under $2$-approval is NP-complete for voting profiles consisting of partial chains, but is polynomial-time solvable for voting profiles consisting of partitioned preferences.
In view of the intractability implied by the aforementioned NP-complete cases,  we use the three-phase method described in Section~\ref{sec:possible:threephase} that may invoke the ILP solver for difficult cases. We evaluate  the performance of this method here, demonstrating the impact of the  number of voters $n$ and the number of candidates $m$ on the running time of PW. (Note that we include $2$-approval for partitioned preferences into the comparison, for consistency of presentation.)  We fix $m$ at 25 and vary $n$ between 10 and 10,000 on a logarithmic scale, and then fix $n$ at 10,000 and vary $m$ between 5 and 25 on the linear scale. Even with these modest values of $m$, the ILP solver can take a very long time.  Thus, to make our experimental evaluation manageable, we set an end-to-end cut-off of 2,000 seconds per instance.  In what follows, we report the running times of the instances that completed within the cut-off, and additionally report the percentage of completed instances.

Figures~\ref{fig:PW_2app}(a) and \ref{fig:PW_Borda}(a) show the running time as a function of the number of voters for $2$-approval and Borda, respectively. The running time increases linearly with the number of voters.  Interestingly,  partial chains datasets take as long or longer to process as  RSM Mix datasets.  This is because Phase 1 of the three-phase computation is more effective for 
RSM Mix, with fewer candidates passed on to Phase 2.
Phases 1 and 2 are effective in pruning non-winners and in identifying clear possible winners. Of the 75 instances we executed for this experiment for each scoring rule, only 13 (17\%) needed to execute Phase 3 (\ie invoke the ILP solver) for 2-approval, and only 9 (12\%) ---  for Borda, with at most 3 candidates to check.  Of the 9 instances that reached Phase 3 for Borda, 6 timed out at 2,000 sec.  No other instances timed out in this experiment.  Instances reaching Phase 3 are responsible for the high variability in the running times. For 2-approval, all instances that reached Phase 3 computed in under 148 seconds (median 8.46 seconds, mean 30.69 seconds, stdev 48.72 seconds).  For Borda, for the three instances that executed Phase 3 and did not timeout, the running times were 6 sec, 15 sec, and 381 sec.  In contrast, all remaining instances --- those that did not execute Phase 3 --- computed in under 53 sec (median 2.44 sec, mean 9.50 sec, stdev 15.14 sec).

\begin{figure}[t]
  \centering
  \begin{tabular}{c @{} c }a
  \hspace{3mm}\includegraphics[scale=.475]{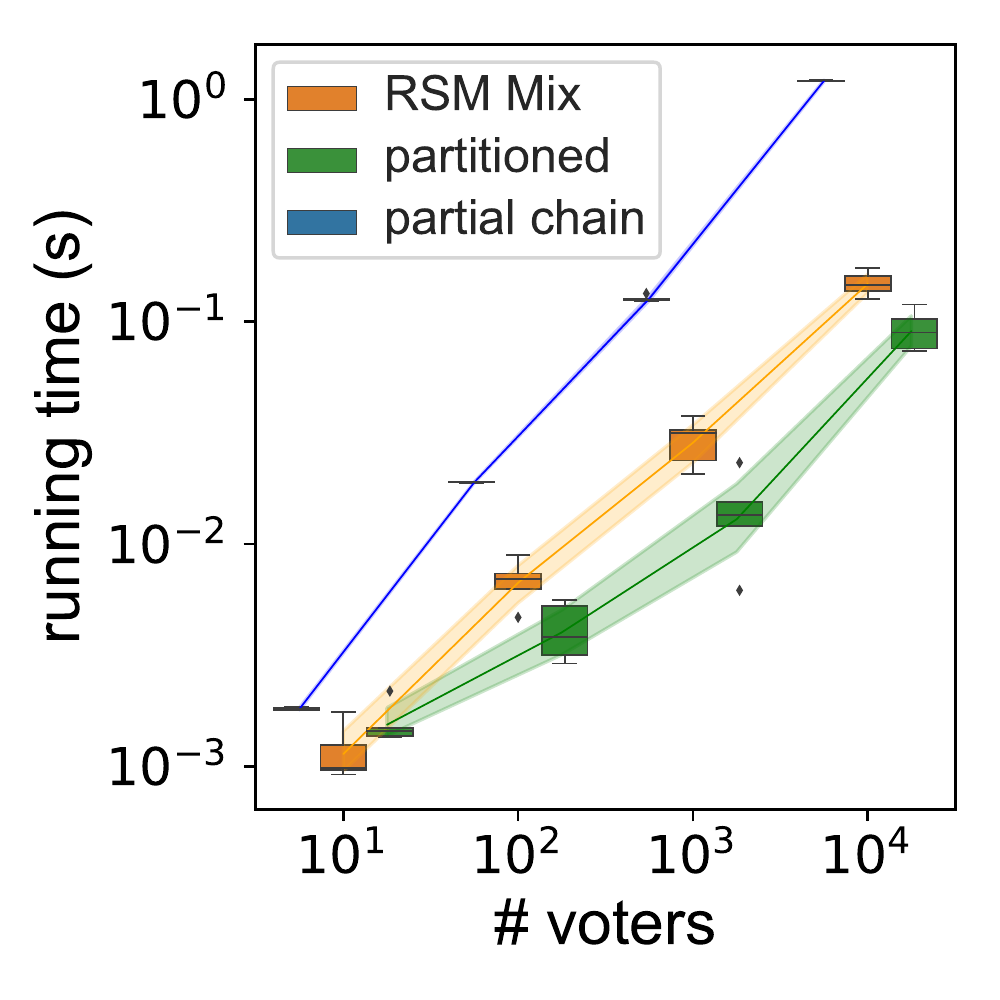} \hspace{3mm} &
  \includegraphics[trim=0 0 0 0, clip, scale=.4750]{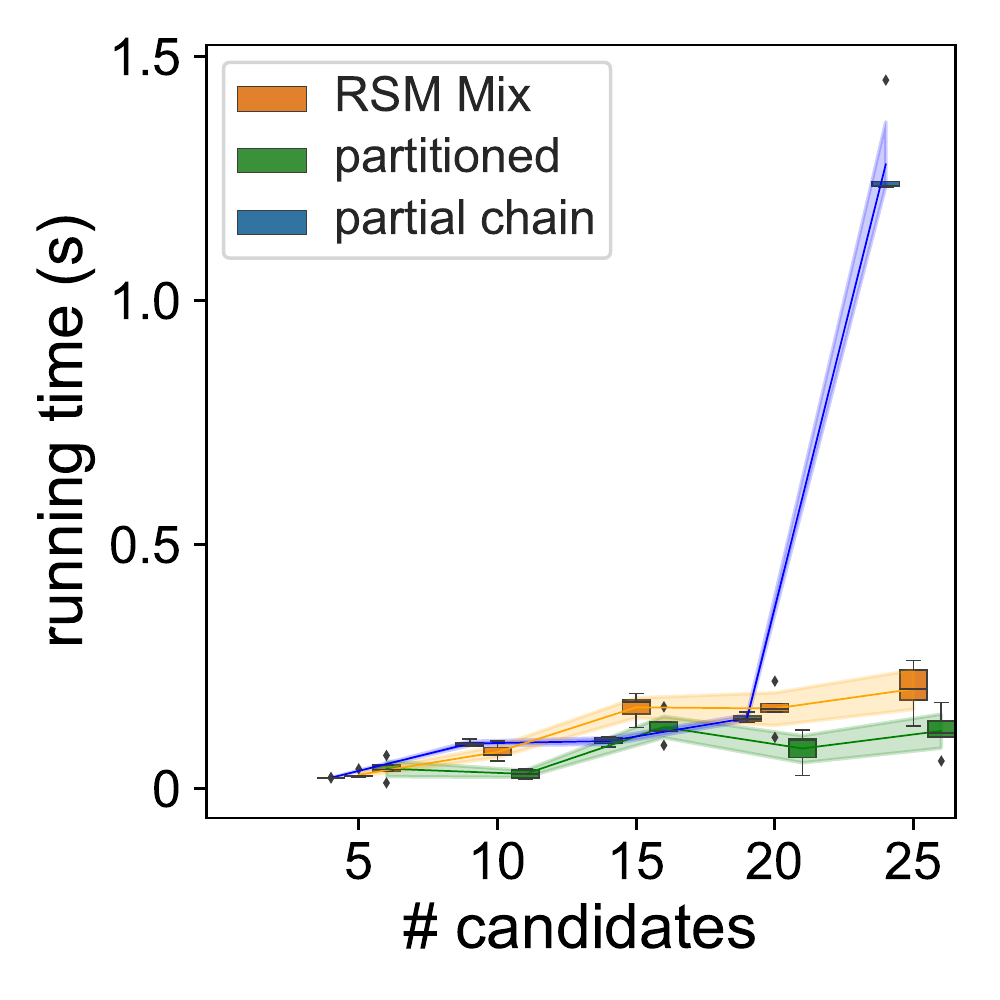}\\
  \small (a) 25 candidates & \small (b) 10,000 voters
  \end{tabular} 
  \caption{Running time of the computation of the set of {\bf possible winners} on \textbf{\plurality},  using an optimized implementation of the algorithm from Betzler and Dorn~\cite{DBLP:journals/jcss/BetzlerD10}. Most instances complete in less than 0.5 seconds. Running times are higher when there are more possible winners.  For this reason, computation is fastest on partitioned preferences, and slowest on partial chains. 
  }
  \label{fig:PW_Plurality}
\end{figure}


Figures~\ref{fig:PW_2app}(b) and~\ref{fig:PW_Borda}(b) show the running times as a function of the number of candidates under 2-approval and Borda.  We make similar observations here as in our discussion of Figures~\ref{fig:PW_2app}(a) and~\ref{fig:PW_Borda}(a), noting that only 9 instances instances reached Phase 3 for 2-approval, and only 4 reached Phase 3 for Borda.  These instances, all RSM Mix, took longer to run, and contributed the most to running time variability.

\paragraph{$2$-approval on partitioned preferences: three-phase computation vs. network flow} 
\eat{
\julia{Phokion, I tried moving this, so this paragraph is before ``$2$-approval and Borda under three-phase computation'', per our discussion, but I didn't like it there, because what's here is really a detail, we should present the more important results first.  Also, in this paragraph I need to know what three-phase computation is, and the reader is reminded of this in the earlier set of experiments.  I don't feel strongly about this, please move if you like.}}
For the $2$-approval rule on voting profiles consisting of  partitioned preferences, we also implemented Kenig's~\cite{DBLP:conf/atal/Kenig19}  polynomial-time algorithm, which is based on network-flow, and we compared its performance to that of three-phase computation.   
Figure~\ref{fig:PW_2app}(c) shows the running times as a function of the number of partitions for instances containing 25 candidates and 10,000 voters. None of the 30 instances needed ILP (Phase 3) while using the three-phase computation. Overall, our three-phase approach is both more general in terms of the datasets it handles, and it outperforms the polynomial-time network-flow algorithm for partitioned preferences.  

\begin{figure}[t]
  \centering
  \begin{tabular}{c @{} c @{} c }
  \hspace{-3mm}\includegraphics[scale=.475]{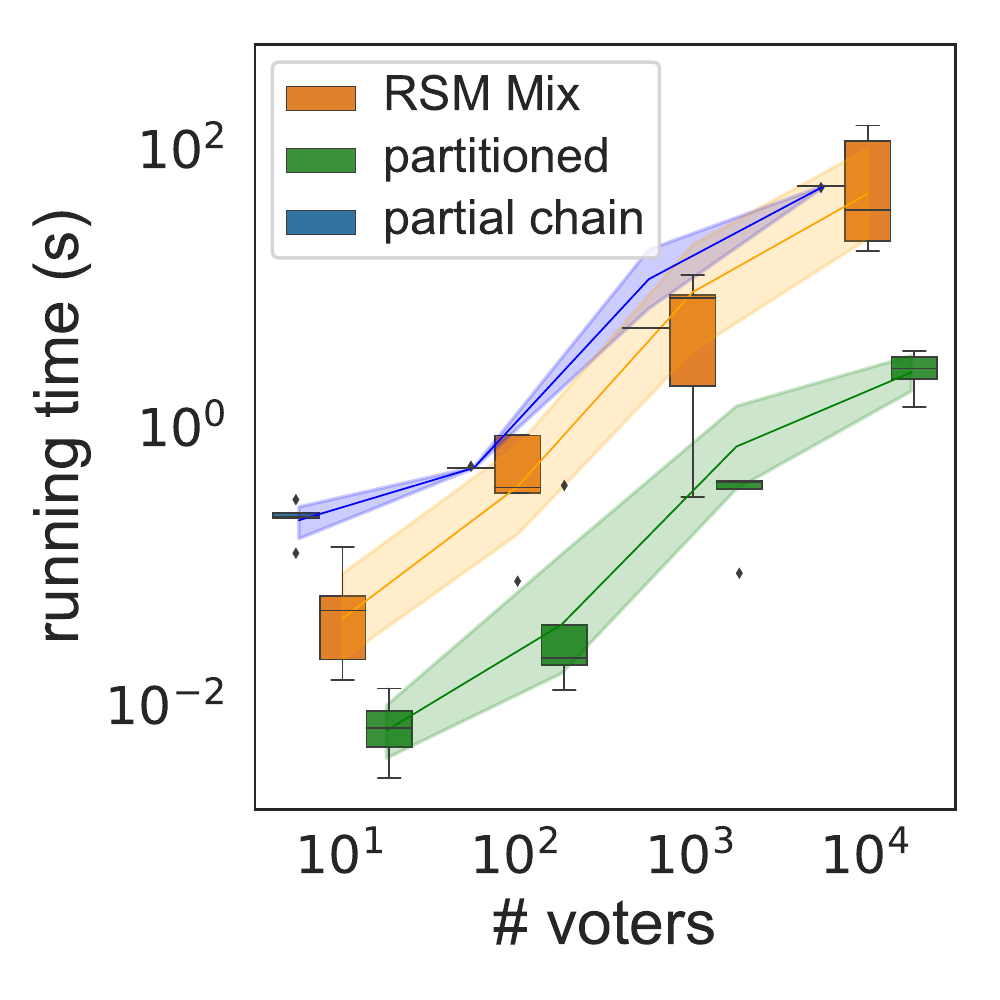} \hspace{-2mm} &
  \includegraphics[trim=0 0 0 0, clip, scale=.4750]{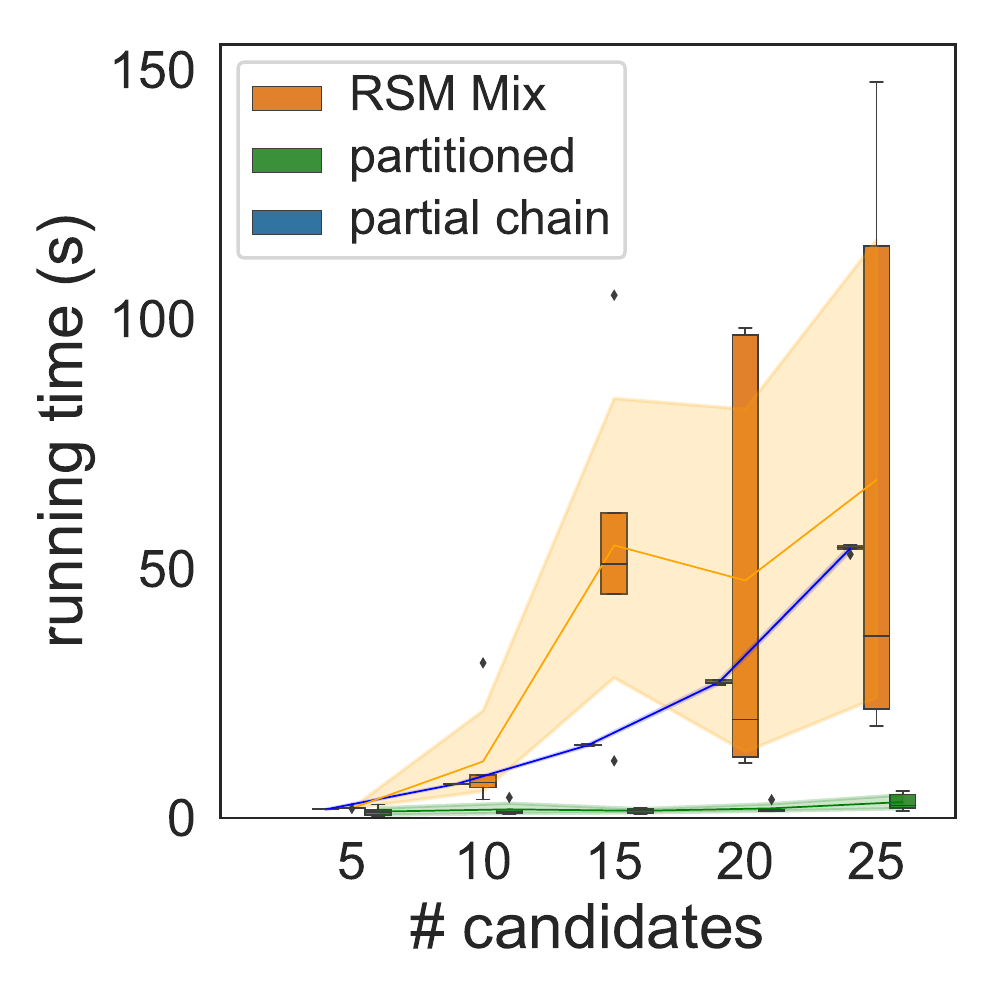}&
  \includegraphics[trim=0 0 0 0, clip, scale=.4750]{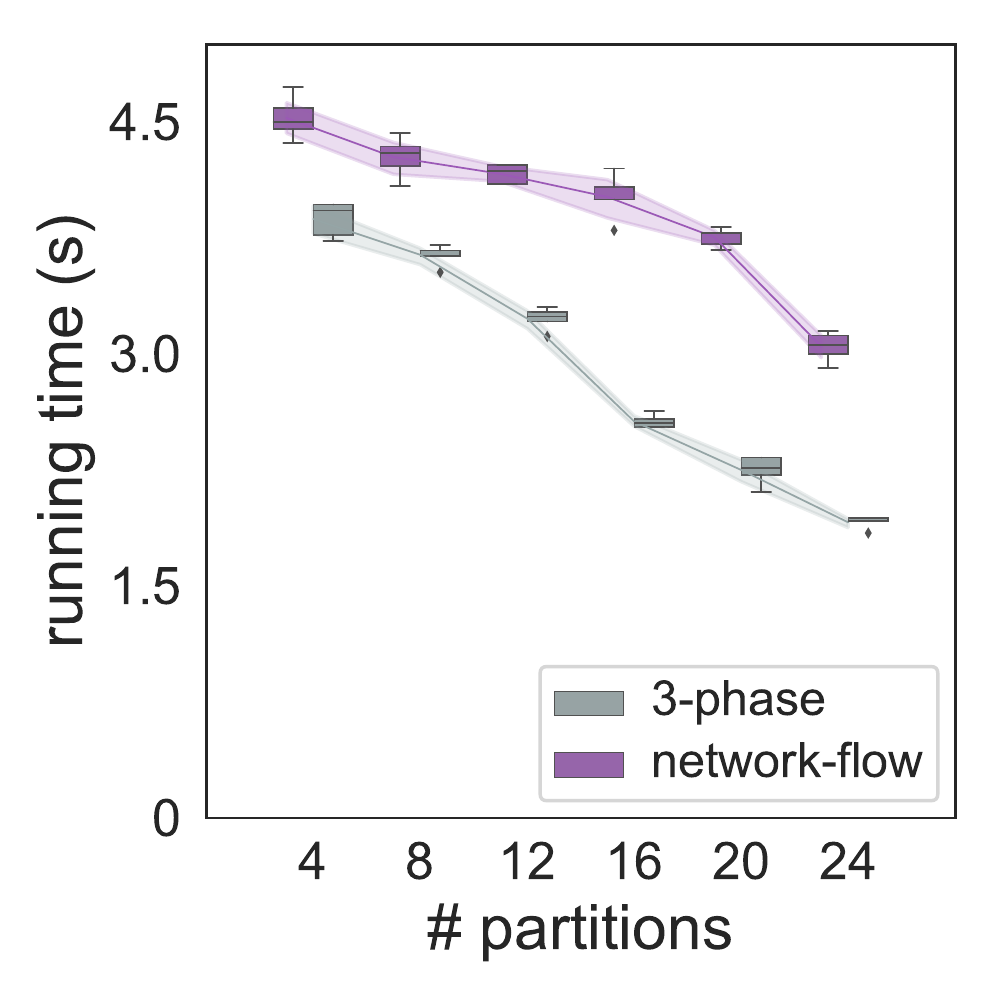}\\
  \small (a) 25 candidates & \small (b) 10,000 voters & \small (c) 25 candidates \& 10,000 voters
  \end{tabular} 
  \caption{Running time of the three-phase computation of the set of {\bf possible winners} on \textbf{2-approval} scoring rule. (a) None of the 75 instances timed out at 2,000 sec.  These were among only 13 instances (9 RSM Mix and 4 partial chain) that needed to execute the ILP solver in Phase 3. (b) None of the 75 instances timed out at 2,000 sec. These were among only 9 instances (all RSM Mix) that needed to execute the ILP solver in Phase 3. (c) Comparison of three-phase computation with polynomial-time algorithm (based on flow network and theoretical results in \cite{DBLP:conf/atal/Kenig19}) by varying the number of partitions in each instance.
  }
  \label{fig:PW_2app}
\end{figure}

\begin{figure}[t]
  \centering
  \begin{tabular}{c @{} c}
  \hspace{3mm}\includegraphics[scale=.475]{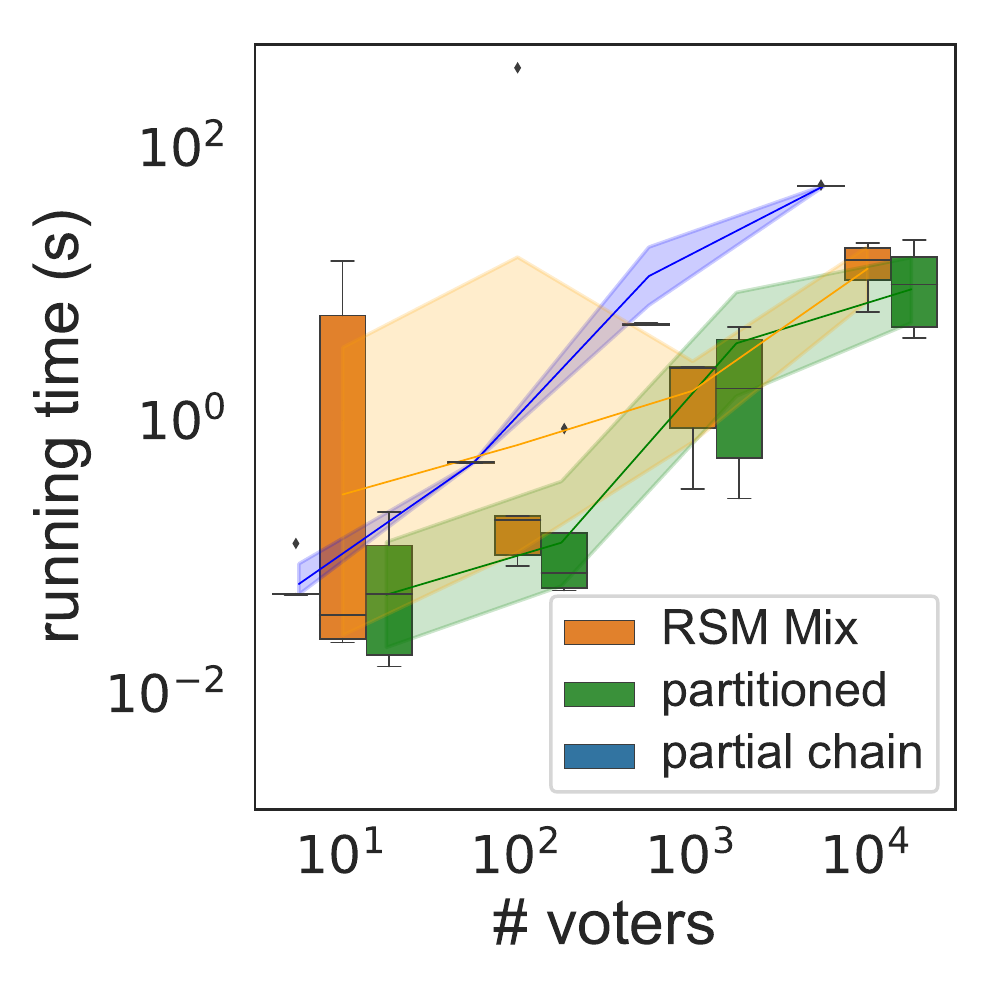} \hspace{3mm} &
  \includegraphics[trim=0 0 0 0, clip, scale=.4750]{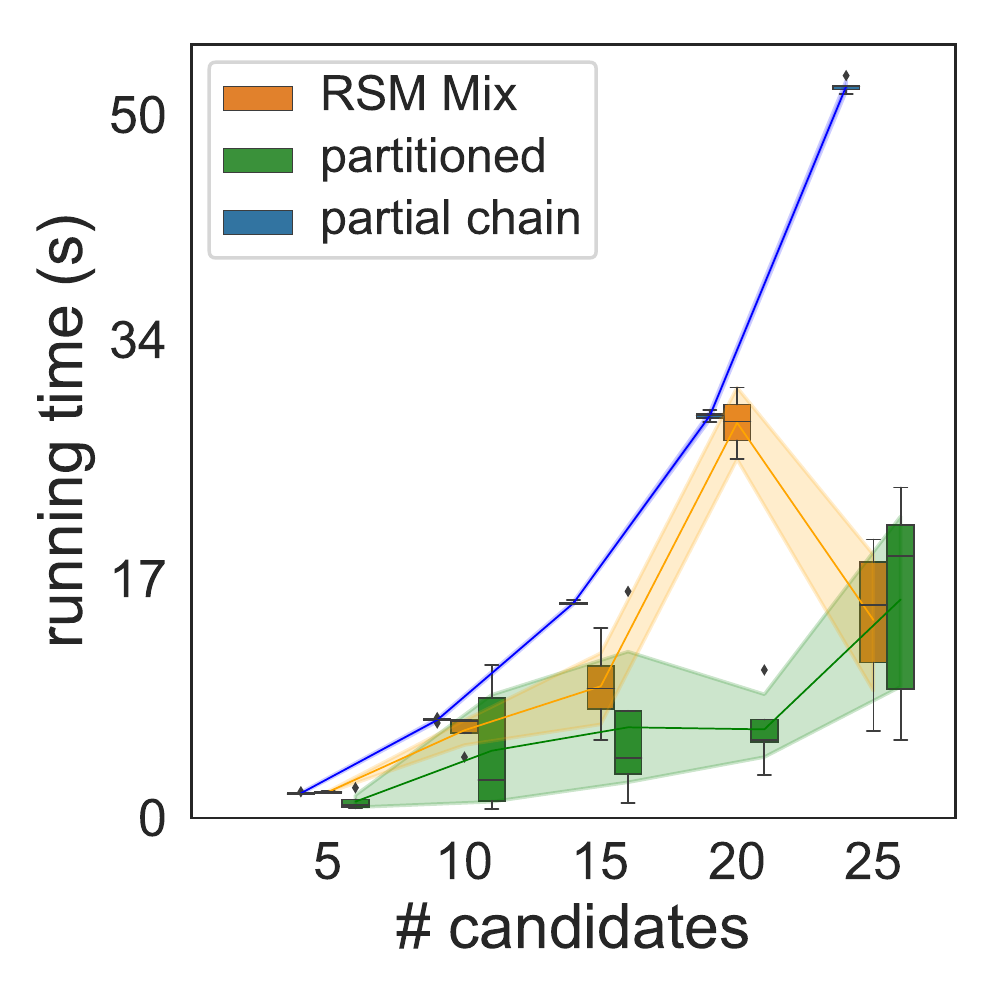}\\
  \small (a) 25 candidates & \small (b) 10,000 voters
  \end{tabular} 
  \caption{Running time of the three-phase computation of the set of {\bf possible winners} on \textbf{Borda} scoring rule. (a) 6 (all RSM Mix) out of 75 instances (8.0\%) timed out at 2,000 sec.  These were among only 9 instances (all RSM Mix) that needed to execute the ILP solver in Phase 3. (b) 4 (all RSM Mix) out of 75 instances (5.3\%) that needed to execute the ILP solver, timed out at 2,000 sec.
  }
  \label{fig:PW_Borda}
\end{figure}

\paragraph{Drilling down on the phases of the three-phase computation}
Next, we measured the effectiveness of the first two phases of the three-phase computation, which run in polynomial time in the number of candidates.  To do so, we calculate the proportion of profiles for which the three-phase computation terminates after the first two phases, under the Borda scoring rule. We created 10,000 profiles consisting of 100 voters and 10 candidates using a mixture of three RSMs, as described in Section~\ref{sec:exp:setup}. 

Figure~\ref{fig:RSM_sucess_dataset} presents the density distribution of the resulting posets (as in Figure~\ref{fig:RSM_eval_gehrlein}), and highlights the instances for which PW computation terminated after two phases in purple, and those for which all three phases were necessary in yellow.   In summary, PW terminated after two phases for 91.62\% of the instances.  Phase 3 was needed primarily  when the $\phi$ parameter was low and the average density was medium, or when the $\phi$ parameter as well as the average density were both high. Profiles with low average density always terminated after the second phase. 

We also compared the average running time of the first two phases of the PW algorithm using RSM profiles with profiles generated using Gehrlein's methods.   In this experiment, we generated 10,000 profiles using a mixture of 3 RSM models, with $m=10$ candidates and $n=100$ voters, and used the Borda scoring rule.  RSM profiles generally take more time across different values of $\phi$ (Figure~\ref{fig:running_time_success}(a)) and across different poset densities (Figure~\ref{fig:running_time_success}(b)) as RSM is more generalized  (Figure~\ref{fig:RSM_eval_gehrlein}).  This finding once again underscores that RSM is able to generate interesting posets, which may be more challenging to process than those generated with alternative methods.

\paragraph{PW on real datasets}
Finally, we computed PW for the real datasets {\em dessert} and {\em travel} using three-phase computation, and found multiple possible winners for all scoring rules.  In all cases, winners were determined in Phases 1 and 2 of the computation, and the ILP is never invoked. All executions took under 23 seconds.  
{\colorOne}

\colorOne


 
 

\colorTwo

\begin{figure}[t]
    \centering
    \includegraphics[scale=0.475]{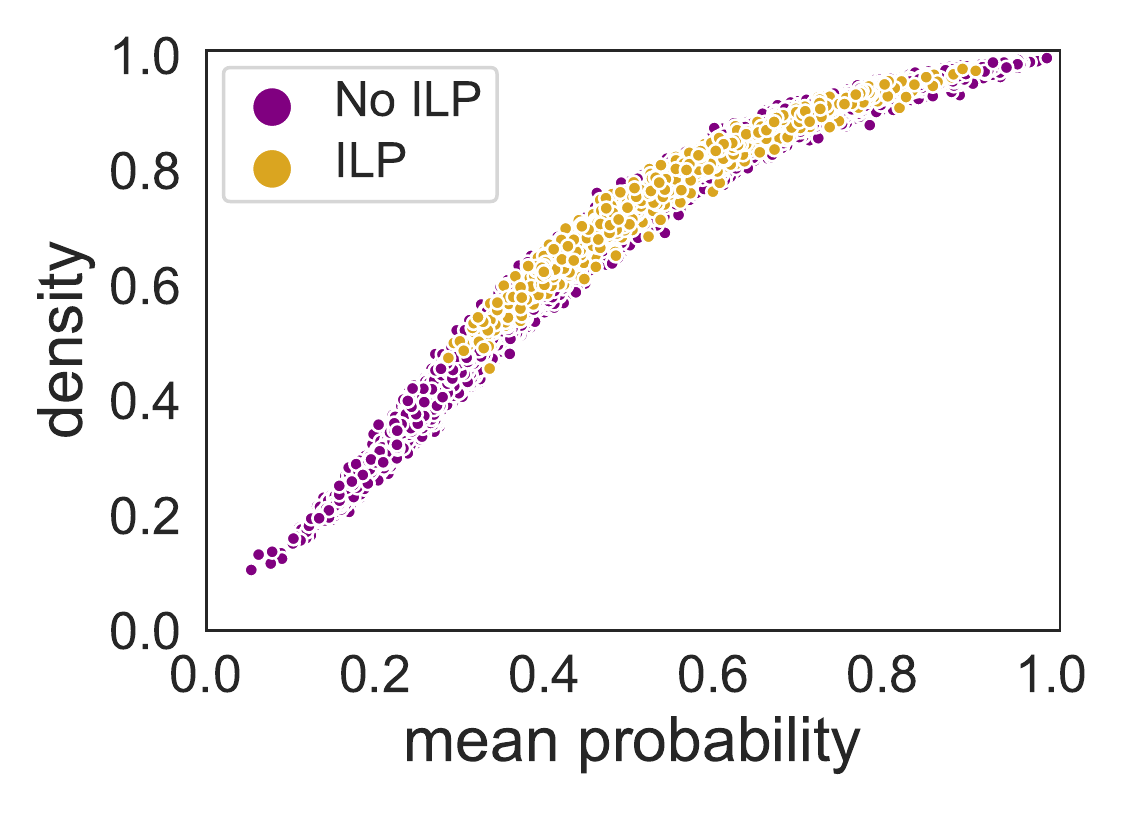}
  \caption{91.62\% of 10,000 RSM profiles, with 10 candidates and 100 voters, found the entire set of possible winners under the Borda scoring rule after the first two phases of pruning without the need of using ILP.}
  \label{fig:RSM_sucess_dataset}
\end{figure}

\begin{figure}[t]
  \centering
  \begin{tabular}{c @{} c }
  \hspace{3mm}\includegraphics[scale=.475]{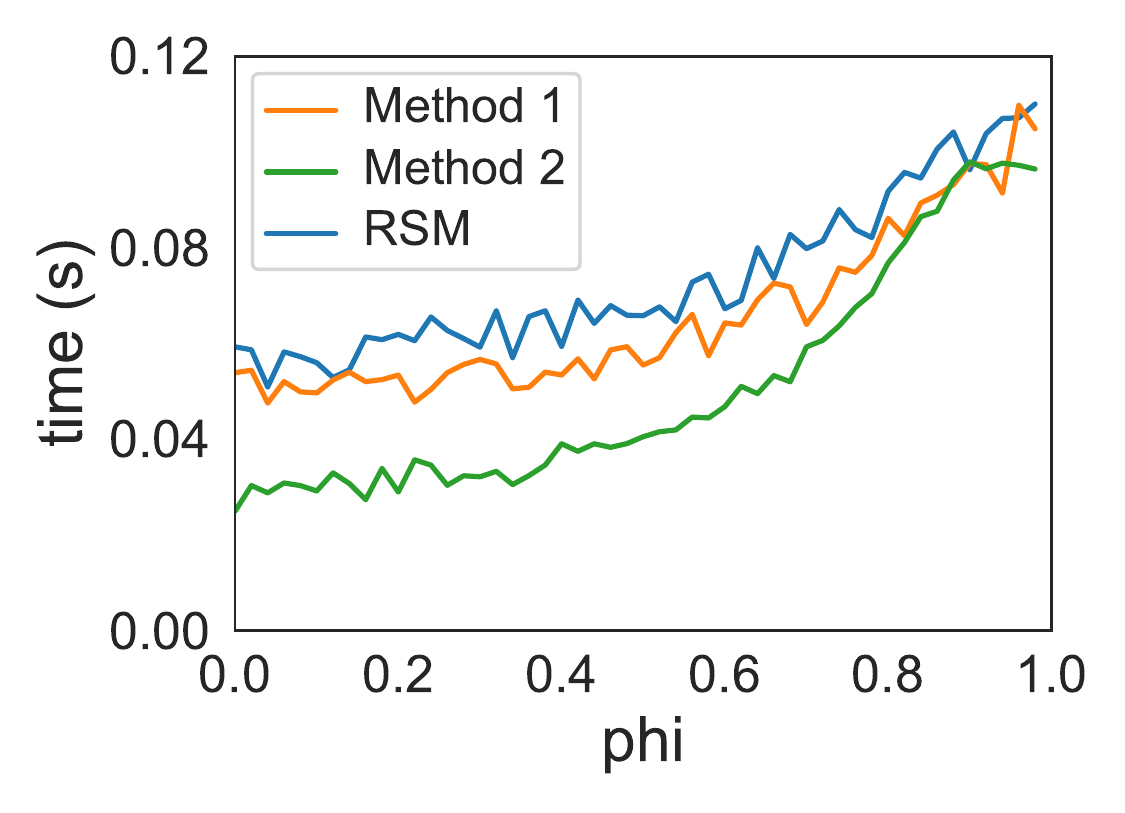} \hspace{3mm} &
  \includegraphics[trim=0 0 0 0, clip, scale=.4750]{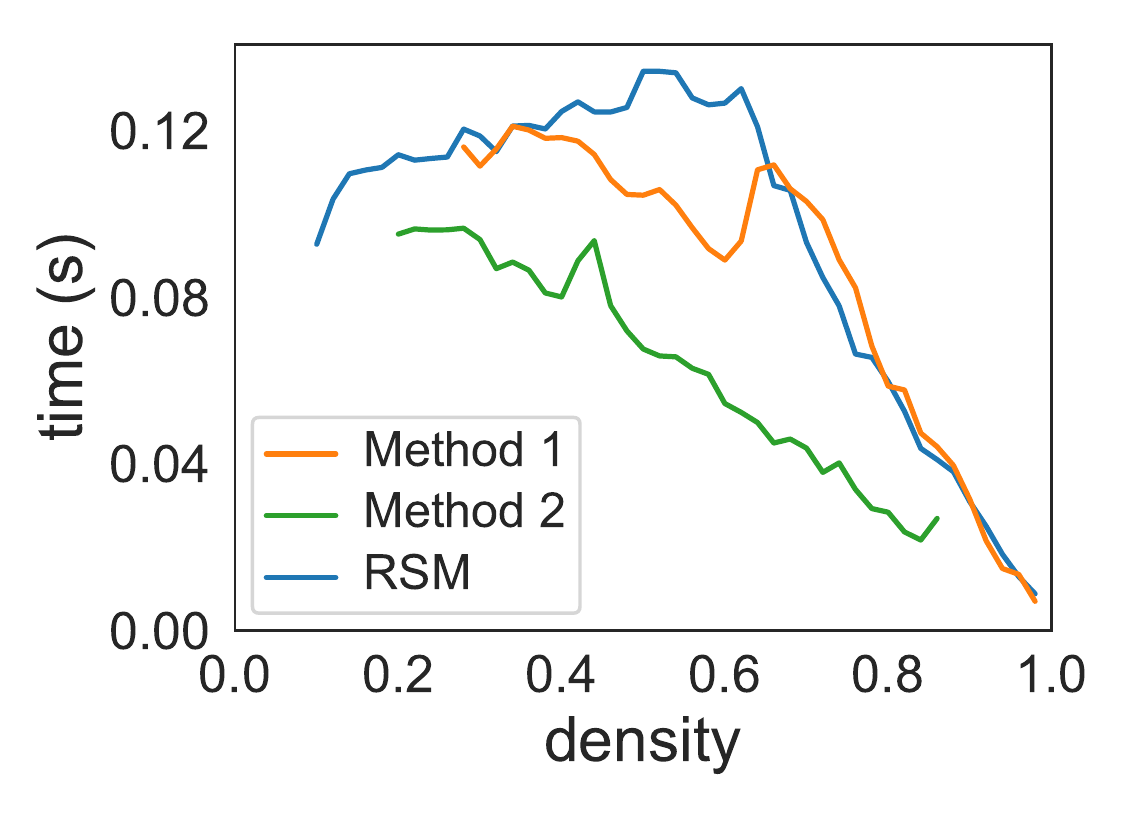}\\
   \small (a) & \small (b) 
  \end{tabular} 
  \caption{Average running times of the first two phases of three-phase computation, for  10,000 RSM profiles, with 10 candidates and 100 voters, for the Borda scoring rule.  Method 1 and Method 2 are from Gehrlein~\cite{gehrlein1986methods}. RSM generates challenging profiles for the computation of PW.}
  \label{fig:running_time_success}
\end{figure}

\section{Concluding Remarks}

The contributions made in this paper can be summarized as follows.
\begin{itemize}
    \item  We introduced new methods for generating partial orders that are
of interest in their own right, most notably, the Repeated Selection Model.
    \item Furthermore, we produced a rich set of datasets that can serve as benchmarks  in other experiments concerning incomplete preferences in computational social choice.
\item We presented a number of algorithmic techniques for computing the necessary winners and the possible winners for positional scoring rules in the presence of incomplete preferences.
 We demonstrated that our techniques scale well in a variety of settings, including settings in which computing the possible winners is an NP-hard problem.

\end{itemize}

 The algorithmic techniques and the data generation methods presented here may find applications in
other frameworks, including the framework introduced in \cite{kimelfeld2018computational} and studied further in \cite{DBLP:conf/pods/KimelfeldKT19}, which aims to bring together computational social choice and databases by supporting queries about winners in elections together with relational context about candidates, voters, and candidates' positions  on issues.

\eat{In this paper, we presented a number of algorithmic techniques for computing the necessary winners and the possible winners for positional scoring rules in the presence of incomplete preferences.
Even though computing the possible winners for rules other than plurality and veto is NP-hard, we demonstrated that our techniques scale well in a variety of settings. Moreover, we introduced new methods for generating partial orders that are
of interest in their own right.  The algorithmic techniques and the data generation methods presented here may find applications in
other frameworks, including the framework introduced in \cite{kimelfeld2018computational}, which aims to bring together computational social choice and databases by supporting queries about winners in elections together with relational context about candidates, voters, and voters' positions  on issues.}

\bibliographystyle{unsrt}  
\bibliography{comsoc}  


\end{document}